\documentclass{article} 
\usepackage{iclr2027_conference,times}

\usepackage{signalpreamble}
\usepackage{tikz}
\usepackage{booktabs}
\usepackage{bm}

\usetikzlibrary{arrows.meta,calc,fit,positioning,shapes.geometric}

\providecommand{\St}{\operatorname{St}}
\providecommand{\diag}{\operatorname{diag}}
\providecommand{\wrap}{\operatorname{wrap}}
\providecommand{\Exp}{\operatorname{Exp}}
\providecommand{\Log}{\operatorname{Log}}

\usepackage[most]{tcolorbox}
\newlength{\boxunit}
\newtcolorbox{lightgrey}{
  colback=gray!10, 
  colframe=gray!60,
  boxrule=0.2pt,      
  auto outer arc,     
  width=\linewidth,
  left=1mm,
  right=1mm,
  top=0.3\boxunit,
  bottom=0.3\boxunit,
  before skip=5\boxunit,
  after skip=5\boxunit,
  fontupper=\small,
  before upper=\setlength{\baselineskip}{0.5\baselineskip}
}

\newtcolorbox{greybox}{
  boxsep=0mm, 
  top=0.4\boxunit,
  bottom=0.4\boxunit,
  before skip=0.4\boxunit,
  after skip=0.4\boxunit,
  width=\linewidth,
  left=2mm,
  right=2mm,
  fontupper=\small
}

\title{PolarFlow: Spectral Geometric Flow Matching for Directed Graph Generation via Polar Decomposition}

\author{%
\hspace*{-\tabcolsep}%
\makebox[\textwidth][c]{%
\begin{tabular}{@{}c@{}}
\textbf{Yanan Zhao$^{1}$, Feng Ji$^{1}$,
Alba Carballo-Castro$^{2}$, Jingyang Dai$^{1}$,} \\
\textbf{Antonio Ortega$^{3}$, Pascal Frossard$^{2}$
and Wee Peng Tay$^{1}$}
\\[0.5em]
{\normalfont\small $^{1}$Nanyang Technological University, Singapore} \\
{\normalfont\small $^{2}$LTS4, EPFL, Lausanne, Switzerland} \\
{\normalfont\small $^{3}$University of Southern California, USA}
\end{tabular}%
}%
}
\newcommand{\bestcell}[1]{%
  \cellcolor{gray!25}\(\bm{#1}\)%
}

\iclrfinalcopy 
\begin{document}

\maketitle

\begin{abstract}
Directed graphs model asymmetric, ordered relationships in biology, transportation, social networks, and electronic design automation, making their generation important for analysis, simulation and design. Recent discrete diffusion and flow-matching methods generate directed graphs effectively, but their edge-wise categorical formulations model every ordered node pair, requiring dense predictions during training and repeated categorical updates during generation. This motivates a fundamental question: can comparable generative quality be achieved through a more computationally efficient formulation that better exploits the structure of directed adjacency matrices? Moreover, existing directed-graph benchmarks cover a limited range of network structures, hindering a comprehensive evaluation of graph generation.
We introduce \emph{PolarFlow}, a continuous geometric flow-matching framework based on the polar decomposition of the adjacency matrix and truncated spectral coordinates. Its representation consists of four coupled coordinate components: a Stiefel frame and rotation angles that encode the orthogonal factor, together with a second Stiefel frame and nonnegative eigenvalues that characterize the positive-semidefinite factor. PolarFlow learns a joint velocity field along geometry-aware conditional paths on this product space, reconstructs a continuous adjacency-score matrix, and discretizes it into a binary directed graph. By replacing discrete edge-wise generation with joint continuous transport on this product space, PolarFlow offers a computationally efficient alternative while retaining the global structure of the adjacency matrix. To broaden evaluation, we introduce complementary Circuit and Twitter benchmarks spanning engineered functional and naturally evolved social networks. Across synthetic and real-world datasets, PolarFlow achieves generative quality comparable to competitive discrete methods while substantially reducing training and sampling time.
\end{abstract}

\section{Introduction}
\label{sec:introduction}

Directed graphs naturally represent systems in which the direction of information, influence, or material flow is semantically consequential, including applications in biology
\citep{wei2024inference,takane2025directed,li2006directed}, electronic circuits \citep{liu2025syncircuit}, transportation \citep{concas2022chained}, and social networks \citep{schweimer2022generating}. 
Because edge orientation affects reachability, propagation, and feedback, generating realistic directed graphs supports simulation and data augmentation when observations are scarce or costly, as well as design-space exploration under domain-specific constraints, such as neural architecture search and circuit synthesis \citep{zhang2019dvae,liu2025syncircuit,carballocastro2026directo,li2025layerdag}.

Existing approaches span three broad strategies. First, \emph{latent autoencoding with autoregressive decoding} encodes a directed acyclic graph (DAG) via asynchronous message passing and decodes it node-by-node \citep{zhang2019dvae}. Second, \emph{layerwise autoregressive diffusion}, as in LayerDAG \citep{li2025layerdag}, exploits DAG topology to generate each layer and its incoming edges. Third, \emph{one-shot generation over categorical edge variables}, as in Directo \citep{carballocastro2026directo}, extends discrete flow matching to directed graphs using direction-aware attention and asymmetric positional encodings. A related continuous latent alternative instead applies flow matching to node-level representations learned by a separately trained graph autoencoder \citep{siraudin2026principledlatentdiffusiongraphs}. Further discussion and comparison of related methods are provided in \cref{app.related_works}.

Each paradigm entails distinct computational or representational trade-offs. Autoregressive and layerwise methods generate nodes or topological layers sequentially, limiting parallelism; their reliance on topological structure also restricts them to DAGs.
Discrete flow-matching methods avoid this sequential dependence by predicting all edge variables in parallel at each step, but require dense outputs over all $N(N-1)$ ordered node pairs. With $\calC_E$ denoting the number of edge categories, including the absent-edge category, each network evaluation produces $\mathcal{O}(N^2\calC_E)$ edge logits, accumulating to $\mathcal{O}(TN^2\calC_E)$ over $T$ sampling steps, excluding the costs of the denoising backbone and structural encodings. Continuous latent methods reduce repeated adjacency-space prediction but require a separately trained encoder--decoder. These trade-offs motivate a central question: can a continuous representation derived directly from the adjacency matrix capture orientation-sensitive structure and achieve comparable generative quality more \emph{efficiently}?

\begin{figure}[t]
    \centering
    \includegraphics[width=\textwidth]
    {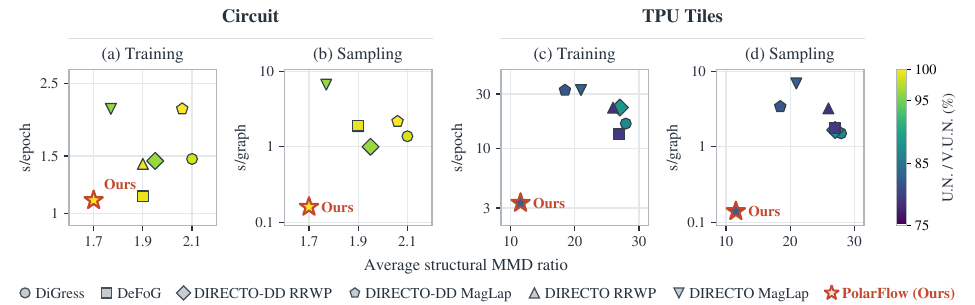}
    \vspace{-6mm}
    \caption{\small Quality-efficiency comparisons on Circuit and TPU Tiles.
    Lower-left is better.
    }
    \label{fig:quality_efficiency}
    \vspace{-5mm}
\end{figure}

To answer this question, we seek a representation of directed graphs that enables continuous flow matching while retaining the global structure of the adjacency matrix. We achieve this through polar decomposition, which separates a directed adjacency matrix into orientation and magnitude components, represented in a compact set of spectral and geometric coordinates. Building on this representation, we introduce \emph{PolarFlow}, which learns a continuous velocity field specifying how the orientation and magnitude coordinates evolve together along paths from source to data representations; the adjacency matrix is reconstructed only at the end of the generation process. This design replaces edge-wise categorical prediction and iterative discrete updates with continuous transport in a lower-dimensional structured space, directly targeting the computational bottleneck of existing discrete approaches. Importantly, our goal is not to trade generative quality for efficiency: as shown in \cref{sec:experiments} (previewed in \cref{fig:quality_efficiency}), PolarFlow achieves quality comparable to that of strong discrete flow-matching methods while substantially reducing both training and sampling times.

 
Our contributions are summarized as follows.
\begin{itemize}
    \item[\textbf{(i)}]  We introduce a compact polar--spectral representation for general directed graphs that separates orientation-sensitive alignment from nonnegative spectral magnitude. Its truncated coordinates comprise two Stiefel frames, rotation angles, and nonnegative eigenvalues.
    \item[\textbf{(ii)}]  We develop geometry-aware conditional flow matching that jointly evolves these four channels, together with a reconstruction and discretization procedure for generating binary directed structures, and a separate process for categorical node-label generation.
    \item[\textbf{(iii)}]  We establish the local perturbation stability of PolarFlow under spectral and geometric regularity, proving that small perturbations of a regularized adjacency endpoint induce proportionally small changes in the conditional flow and learned velocity predictions.
    \item[\textbf{(iv)}] We introduce new \emph{Circuit and Twitter datasets}, representing engineered functional networks and naturally formed social networks, respectively. We demonstrate that PolarFlow achieves competitive generative quality while delivering substantial empirical efficiency gains across both synthetic and real-world directed graphs.
\end{itemize}

\section{Problem Formulation and Preliminaries}
\label{sec:preliminaries}
\vspace{-2mm}
\paragraph{Directed graphs and conditional generation}

Let $G=(\calV,\bA,\by)$ be a directed graph with vertex set $\calV=\{1,\ldots,N\}$, binary adjacency matrix $\bA\in\{0,1\}^{N\times N}$, and optional categorical node labels $\by\in\calC^{N}$, where $\calC$ is finite. We use $\bA_{ij}=1$ to indicate the directed edge $i\to j$ and assume $\bA_{ii}=0$ unless stated otherwise. Given training graphs, our goal is to learn the adjacency distribution $p_\phi(\bA\mid N)$ conditioned on graph size and, when node labels are available, the label distribution 
$p_\psi(\by \mid \bA)$ conditioned on the adjacency matrix, where $\phi$ and $\psi$ denote the parameters of the structural and label models, respectively.
At generation time, $N$ is sampled from the empirical distribution of graph sizes in the training set and held fixed throughout structural generation; for labeled datasets, node labels are then sampled conditional on the generated adjacency matrix.
\vspace{-1mm}
\paragraph{Polar decomposition and spectral representations}

Every real square matrix $\bA\in \Real^{N \times N}$ admits a right polar decomposition (see \citet{kwak2024frequency,ji2026digraph} for the graph signal processing perspective)
\begin{equation}
\label{eq.polar_decomposition}
    \bA=\bU\bP, 
\end{equation}
where $\bU \in \Real^{N \times N} $ is orthogonal and $\bP = (\bA^*\bA)^{1/2} \succeq 0$\footnote{In this paper, we use $^*$ for the Hermitian transpose, which is the usual matrix transpose for real matrices.} is symmetric positive semi-definite. The symmetric factor admits the eigendecomposition
\begin{equation}
\bP=\bQ_p\bLambda_p\bQ_p^*, \qquad \bLambda_p=\diag(\lambda_{p,1},\dots, \lambda_{p,N}),
\end{equation}
where $\bQ_{p}$ is orthogonal and $\lambda_{p,1}\geq \lambda_{p,2} \geq \cdots \geq \lambda_{p,N} \geq 0$. The orthogonal factor $\bU$ is normal and also admits an unitary eigendecomposition, whose eigenvalues lie on the complex unit circle. Specifically, we have
\begin{equation}
\bU= \bQ_u \bLambda_u\bQ_u^*,
\label{eq:orthogonal_spectral}   
\end{equation}
where the diagonal matrix $\bLambda_u$ consists of eigenvalues of the form $e^{\mathrm i\theta_i}, \theta_i\in [0,2\pi)$ for $1\leq i\leq N$, which is either $\pm 1$ or form complex conjugate pairs. 
Thus, the polar decomposition represents $\bA$ through orthonormal/unitary frames, circular rotation angles, and nonnegative spectral values.
\vspace{-1mm}
\paragraph{Stiefel manifolds and Riemannian flow matching} 
The spectral frames $\bQ_{p}$ and $\bQ_{u}$ consist of orthonormal columns. An orthonormal or unitary $k$-frame belongs to the Stiefel manifold \citep{Hat02,Abs08}
\begin{align*}
    \St(N,k)=\set*{\bS\in\mathbb M^{N\times k} \given \bS^* \bS=\bI_k},
\end{align*}
with $\mathbb M$ either $\Real$ (for orthogonal matrices) or $\mathbb C$ (for unitary matrices).
The full orthonormal basis corresponds to the special case $k=N$. The tangent space at a point $\bS \in \St(N,k)$ is
\begin{align*}
\calT_{\bS}(\St(N,k))=\set*{\bV\in\mathbb M^{N\times k} \given \bS^* \bV+\bV^* \bS=0}.  
\end{align*}
The Stiefel manifold is equipped with the Riemannian metric $g_{\bS}(\bV_{1},\bV_{2})=\text{tr}(\bV_{1}^*\bV_{2})$
for $\bV_1,\bV_2 \in \calT_{\bS}(\St(N,k))$. \emph{Note}: for convenience, we do not distinguish real and complex Stiefel manifolds in notation if no confusion arises.

Riemannian flow matching generalizes flow matching to a Riemannian manifold $(\calM, g)$, where $\calM$ denotes the state space and $g_{Z}$ defines an inner product on each tangent space $\calT_{Z}\calM$. We adopt its conditional formulation, which constructs a path for each source–data endpoint pair. Given a source endpoint $Z_0$ and a data endpoint $Z_1$, let $Z_{t}\in \calM$ be a differentiable conditional path satisfying $Z_{t=0}=Z_0$ and $Z_{t=1}=Z_1$ for $t\in[0,1]$.
Its velocity $\dot Z_t=\mathrm dZ_t/\mathrm dt\in \calT_{Z_t}\mathcal M$ provides a tractable regression target for a time-dependent vector field $v_\phi(Z,t)\in \calT_Z\calM$, with the prediction error measured using $g_{Z_t}$.
Because $Z_t$ and $\ud Z_t/\ud t$ are analytically available during training, the learned ODE need not be integrated and is solved only during generation.

\section{Methodology: PolarFlow}
\label{sec:method}
In this section, we introduce PolarFlow, a geometry-aware flow-matching framework that maps directed adjacency matrices into four coupled polar–spectral channels: two Stiefel frames, rotation angles, and nonnegative eigenvalues. A shared time-dependent vector field is trained jointly along channel-specific conditional paths in the resulting product space. We establish local perturbation stability, describe reconstruction and discretization to obtain a binary directed adjacency matrix, and use a separate structure-conditioned model to generate node labels for labeled datasets. The complete PolarFlow pipeline is illustrated in \cref{fig:polarflow_overview}.

\subsection{Four-channel polar--spectral representation}
\label{sec:representation}

Rather than modeling the $N^{2}$ entries of a directed adjacency matrix directly, PolarFlow constructs a compact geometric representation with four polar--spectral channels. It obtains these channels by applying the polar decomposition in \cref{eq.polar_decomposition} to separate the orientation and scaling information in the input matrix. For a data adjacency matrix $\bA$, the representation is computed from the perturbed matrix $\bA_\varepsilon=\bA+\bE_\varepsilon$, where the perturbation satisfies
$\lVert \bE_\varepsilon\rVert_2\leq\varepsilon$. This perturbation is introduced to mitigate potential rank and spectral degeneracies; setting $\varepsilon=0$ recovers the unperturbed endpoint.

For the scaling channel, let $\lambda_{p,1}\geq \lambda_{p,2} \geq \cdots \geq \lambda_{p,N} \geq 0$ denote the eigenvalues of $\bP$, with corresponding orthonormal eigenvectors $\bq_{p,1},\ldots, \bq_{p,N}$. Retaining the $k$ leading eigenpairs gives the rank-$k$ approximation 
\begin{align}
    \bP_{k} = \bQ_{p}\bLambda_{p}\bQ_{p}^{*} \qquad \bQ_{p} = [\bq_{p,1},\ldots, \bq_{p,k}] \in \St(N,k) 
\end{align}
where $\bLambda_{p}=\diag(\blambda_p^{(k)})$ with $\blambda_p^{(k)}=(\lambda_{p,1},\cdots, \lambda_{p,k}) \in \Real^k_{+}$. Thus, $\bQ_{p}$ and $\blambda_p^{(k)}$ constitute the two scaling channels. To construct the two orientation channels $\bQ_{u}$ and $\btheta^{(k)}$, we use the spectral representation of the orthogonal factor $\bU$ in \cref{eq:orthogonal_spectral}. Let $(e^{\mathrm i\widetilde{\theta}_i},\widetilde{\bq}_{u,i})$ with $i=1,\ldots,N$ denote its eigenpairs, where $\tilde{\theta}_i\in (-\pi,\pi]$ and $\tilde{\bq}_{u,i}\in \mathbb C^{N}$. Since $\bU$ is real, every nonreal eigenmode occurs together with its complex conjugate, whereas the real eigenvalues $+1$ and $-1$ correspond to the angles $0$ and $\pi$, respectively.


Unlike the scaling modes of $\bP$, which are naturally ranked by their eigenvalues, all eigenvalues of the orthogonal factor $\bU$ have unit modulus. Their magnitudes therefore provide no criterion for constructing a $k$-dimensional orientation representation compatible with the rank-$k$ scaling representation. To select the $k$ modes most relevant to the product $\bA_\varepsilon=\bU\bP$, we evaluate each mode using both its angular deviation from the identity and the scaling magnitude of $\bP$ along its direction. This motivates the polar-weighted orientation score $s_i=2\big|\sin(\widetilde{\theta_i}/{2})\big|\widetilde{\bq}_{u,i}^{\ast}\bP\widetilde{\bq}_{u,i}$,
where $(\cdot)^{\ast}$ denotes the conjugate transpose. The angular factor $2|\sin\big(\widetilde{\theta_i}/{2}\big)|=|e^{\mathrm i\widetilde{\theta_i}}-1|$ measures the deviation of the corresponding eigenvalue from the identity orientation, while $\widetilde{\bq}_{u,i}^{\ast}\bP\widetilde{\bq}_{u,i}$ measures the scaling magnitude associated with that mode. Because $\bP$ is real and positive semidefinite, $s_i\in \Real_{\geq 0}$, and every complex-conjugate pair receives identical scores. The score therefore provides a valid scalar ranking that is consistent with pair-preserving mode selection.
 
PolarFlow then selects a size-$k$ index set $\calI_{u}=\{i_{1},\ldots, i_{k}\}$ with the highest scores, subject to closure under complex conjugation. In particular, whenever a selected index corresponds to a nonreal eigenvalue, the index of its conjugate mode is also included. Thus, each nonreal conjugate pair contributes two indices to $\calI_{u}$. Using this common index set, the selected eigendirections are represented by a real orthonormal frame and their corresponding eigenangles are collected as
\begin{align}
    \bQ_u\in\St(N,k)  \qquad \btheta^{(k)} = (\tilde{\theta}_{i_{1}},\ldots,\tilde{\theta}_{i_{k}})^\top \in \Theta_{k} \subseteq \mathbb T^{k}.
\end{align}
Here, $\mathbb T = \Real/(2\pi\mathbb Z)\simeq S^{1}$ denotes the circle of angles modulo $2\pi$, and $\Theta_{k}\subseteq \mathbb T^{k}$ is the admissible subset induced by the conjugate structure of the retained orientation modes.

Combining the scaling and orientation components, we define the $k$-dimensional four-channel representation of $\bA_{\varepsilon}$ as $\Phi_{k}(\bA_{\epsilon}) =  \big(\bQ_p,\blambda_p^{(k)}, \bQ_u,\btheta^{(k)}\big) \in\mathcal M_k$,
where $\calM_k = \St(N,k) \times \Real_{+}^{k} \times \St(N,k) \times \Theta_k$ is the associated geometric product state space. The two Stiefel factors represent the retained scaling and orientation directions, while $\blambda_p^{(k)}$ and $\btheta^{(k)}$ contain their respective spectral coordinates.

\subsection{Coupled endpoint distributions and geometry-aware paths}
\label{sec:paths}
PolarFlow constructs its source and target states by applying the same polar–spectral map $\Phi_k$ to the corresponding matrix endpoints. For each graph size $N$, let $\bA_0\in\mathbb R^{N\times N}$ be a deterministic source template and let $ \bF_{ij}\overset{\mathrm{i.i.d.}}{\sim}\mathcal N(0,1)$. The source matrix endpoint and its four-channel representation are defined as
\begin{equation}
    \bM_0 = \bA_0+\sigma \bF,\qquad \bZ_0=\Phi_{k}(\bM_0),
    \label{eq:source}
\end{equation}
where $\sigma>0$ controls the source noise level. For a target adjacency matrix $\bA_1\sim p_{\mathrm{data}}(\cdot\mid N)$, the regularized target matrix endpoint and its representation are 
\begin{equation}
    \bM_1 = \bA_1+\bE_\varepsilon,\qquad
    \bZ_1=\Phi_{k}(\bM_1), \qquad
    \|\bE_{\varepsilon}\| \leq \varepsilon
    \label{eq:target}
\end{equation}
These constructions induce source and target distributions over the four-channel state space. Because every endpoint is obtained from a single matrix through one application of $\Phi_{k}$, its orientation and scaling channels retain the dependencies induced by the underlying polar decomposition. In particular, PolarFlow does not construct an endpoint by independently sampling the four channel-wise marginals.

To define a conditional path between a sampled source–target pair, we make the four components of each endpoint explicit, i.e, $ \bZ_{\ell} = (\bQ_{p,\ell}, \lambda_{p,\ell}^{(k)}, \bQ_{u,\ell}, \btheta_{\ell}^{(k)}) \in \calM_{k}$ with $\ell\in \{0,1\}$.
PolarFlow constructs a conditional path using a shared time $t\sim \text{Unif}([0,1])$. The orientation and scaling frames follow Stiefel geodesics:
\begin{align}
\bQ_{c,t}=\Exp_{\bQ_{c,0}}\!\left(t\Log_{\bQ_{c,0}}(\bQ_{c,1})\right),
    \qquad c\in\{u,p\},
    \label{eq:stiefel_path}  
\end{align}
where $\Exp$ and $\Log$ are the exponential and logarithmic functions between the manifold and its tangent spaces. For the orientation-angle channel, define the componentwise shortest angular displacement as
$\bdelta_\theta^{(k)}=\wrap(\btheta_1-\btheta_0)$,
where $\wrap(\cdot)$ returns the principal representative modulo $2\pi$. The corresponding circular path is $\btheta_t^{(k)} =\wrap(\btheta_0^{(k)}+t\bdelta_{\btheta}^{(k)})$, and
the spectral-value channel follows a Euclidean linear path $\blambda_{p,t}^{(k)} = (1-t)\blambda_{p,0}^{(k)} + t \blambda_{p,1}^{(k)}$.
Since both endpoint vectors are nonnegative, the interpolated spectral values remain in $\Real_{+}^{k}$. The four channels together define the synchronized intermediate state
\begin{equation}
\bZ_{t} = (\bQ_{p,t}, \blambda_{p,t}^{(k)}, \bQ_{u,t}, \btheta_{t}^{(k)}) \in \calM_{k}.
\label{eq.conditional_path} 
\end{equation}
The corresponding pathwise velocities provide the conditional targets for geometric flow matching. Along the selected geodesic branches, the corresponding channel-wise velocities are
\begin{equation}
    \dot{\bQ}_{c,t}=\frac{\ud \bQ_{c,t}}{\ud t} \qquad \text{with}  \qquad c \in \{u,p\}, 
\end{equation}
while the angle and spectral-value velocities are
\begin{align*}
    \dot{\btheta}_{t}^{(k)} = \frac{\ud \btheta_{t}^{(k)}}{\ud t} = \bdelta_\theta^{(k)}, \quad \text{and} \quad \dot{\blambda}_{p,t}^{(k)}=\frac{\ud \blambda_{p,t}^{(k)}}{\ud t}=\blambda_{p,1}^{(k)} - \blambda_{p,0}^{(k)}
\end{align*}
Collectively, $\bv_{t}(\bZ_{t} |\bZ_{0}, \bZ_{1}) = \big(\dot{\bQ}_{p,t}, \dot{\blambda}_{p,t}^{(k)} ,\dot{\bQ}_{u,t},\dot{\btheta}_{t}^{(k)}\big)$ defines the geometry-aware conditional velocity target. Using the same time $t$ for all four channels ensures that the components associated with a single endpoint pair evolve synchronously.



\begin{figure*}[t]
\centering

\begingroup

\definecolor{PFQp}{HTML}{0752B5}
\definecolor{PFLambda}{HTML}{12B9E8}
\definecolor{PFQu}{HTML}{7122C6}
\definecolor{PFTheta}{HTML}{FF7A1A}
\definecolor{PFInk}{HTML}{202833}
\definecolor{PFMuted}{HTML}{515A67}
\definecolor{PFBorder}{HTML}{B9C2CE}
\definecolor{PFSoft}{HTML}{F7F8FA}

\tikzset{
  pf panel/.style={
    rounded corners=5pt,
    draw=PFBorder!58,
    line width=0.55pt
  },
  pf title/.style={
    anchor=west,
    font=\bfseries\fontsize{8.4}{9.3}\selectfont,
    text=PFInk
  },
  pf flow/.style={
    -{Stealth[length=1.9mm,width=1.35mm]},
    draw=PFInk!88,
    line width=0.72pt,
    line cap=round,
    line join=round
  },
  pf color flow/.style={
    -{Stealth[length=1.8mm,width=1.30mm]},
    line width=0.95pt,
    line cap=round,
    line join=round
  },
  pf box/.style={
    rounded corners=3pt,
    fill=white,
    draw=PFInk!78,
    line width=0.65pt,
    align=center,
    text=PFInk,
    inner sep=3pt
  },
  pf lane head/.style={
    anchor=west,
    font=\bfseries\fontsize{6.2}{6.9}\selectfont,
    text=PFInk
  },
  pf lane note/.style={
    anchor=west,
    align=left,
    font=\fontsize{5.0}{5.6}\selectfont,
    text=PFInk!86
  }
}


\newcommand{\pfTargetHeatmap}[4]{%
  \begin{scope}[shift={(#1,#2)}]
    \foreach \i in {0,...,7}{
      \foreach \j in {0,...,7}{
        \pgfmathtruncatemacro{\shadevalue}{20+mod(#4+19*\i+31*\j,72)}
        \fill[#3!\shadevalue!white]
          ({0.22*\j},{0.22*\i}) rectangle ++(0.22,0.22);
      }
    }
    \draw[PFInk!72,line width=0.58pt] (0,0) rectangle (1.76,1.76);
  \end{scope}%
}

\newcommand{\pfTargetPhi}[2]{%
  \begin{scope}[shift={(#1,#2)}]
    \path[fill=white,draw=PFInk!82,line width=0.72pt]
      (-0.41,-0.70)--(0.34,-0.55)--(0.34,0.55)--(-0.41,0.70)--cycle;
    \node[font=\bfseries\fontsize{7.0}{7.5}\selectfont,text=PFInk]
      at (0.01,0) {$\bPhi_k$};
  \end{scope}%
}

\newcommand{\pfTargetChannels}[3]{%
  \begin{scope}[shift={(#1,#2)}]


    \path[
      rounded corners=6pt,
      fill=white,
      draw=PFInk!76,
      line width=0.72pt
    ]
      (-1.3,-1.53) rectangle (1.24,1.57);

    \node[
      font=\bfseries\fontsize{8.2}{9.0}\selectfont,
      text=PFInk
    ] at (0.04,1.95)
      {$\bZ_{#3}$};

    \path[
      rounded corners=2.5pt,
      top color=PFQp!92,
      bottom color=PFQp,
      draw=PFQp!70!black,
      line width=0.48pt
    ]
      (-1.18,0.69) rectangle (1.10,1.26);

    \node[
      font=\fontsize{7.0}{7.5}\selectfont,
      text=PFInk
    ] at (-0.08,0.975)
      {$\bQ_{p,#3}$};

    \path[
      rounded corners=2.5pt,
      top color=PFLambda!58,
      bottom color=PFLambda!76,
      draw=PFLambda!68!black,
      line width=0.48pt
    ]
      (-1.18,0.03) rectangle (1.10,0.60);

    \node[
      font=\fontsize{7.0}{7.5}\selectfont,
      text=PFInk
    ] at (-0.08,0.315)
      {$\blambda_{p,#3}$};

    \path[
      rounded corners=2.5pt,
      top color=PFQu!90,
      bottom color=PFQu,
      draw=PFQu!70!black,
      line width=0.48pt
    ]
      (-1.18,-0.64) rectangle (1.10,-0.07);

    \node[
      font=\fontsize{7.0}{7.5}\selectfont,
      text=PFInk
    ] at (-0.08,-0.355)
      {$\bQ_{u,#3}$};

    \path[
      rounded corners=2.5pt,
      top color=PFTheta!72,
      bottom color=PFTheta!88,
      draw=PFTheta!68!black,
      line width=0.48pt
    ]
      (-1.18,-1.30) rectangle (1.10,-0.73);

    \node[
      font=\fontsize{7.0}{7.5}\selectfont,
      text=PFInk
    ] at (-0.08,-1.015)
      {$\btheta_{#3}$};

  \end{scope}%
}

\newcommand{\pfTargetFrame}[4]{%
  \begin{scope}[shift={(#1,#2)},rotate=#4]
    \draw[-{Stealth[length=1.8mm,width=1.3mm]},#3,line width=1.10pt]
      (0,0)--(0,0.72);
    \draw[-{Stealth[length=1.8mm,width=1.3mm]},#3,line width=1.10pt]
      (0,0)--(0.70,-0.25);
    \draw[-{Stealth[length=1.8mm,width=1.3mm]},#3,line width=1.10pt]
      (0,0)--(-0.58,-0.34);
    \fill[#3!80!black] (0,0) circle (1.15pt);
  \end{scope}%
}

\newcommand{\pfTargetSpectrum}[5]{%
  \begin{scope}[shift={(#1,#2)}]
    \foreach \x/\h in {-0.40/#3,-0.14/#4,0.12/#5}{
      \filldraw[fill=PFLambda!92,draw=PFLambda!72!black,line width=0.30pt]
        (\x,0) rectangle ++(0.18,\h);
    }
    \draw[PFLambda!72!black,line width=0.45pt] (-0.45,0)--(0.37,0);
    \node[font=\tiny,text=PFLambda!78!black] at (0.61,0.08) {$\cdots$};
  \end{scope}%
}


\newcommand{\pfTargetCircle}[4][0.49]{%
  \begin{scope}[shift={(#2,#3)}]

    \draw[
      PFInk!82,
      dashed,
      line width=0.58pt
    ]
      (0,0) circle (#1);

    \draw[
      -{Stealth[length=1.25mm,width=0.85mm]},
      PFTheta,
      line width=0.95pt
    ]
      (112:#1)
      arc[
        start angle=112,
        end angle=#4,
        radius=#1
      ];

    \fill[PFTheta]
      (#4:#1) circle (1.45pt);

  \end{scope}%
}

\newcommand{\pfTargetCircularPath}[2]{%
  \begin{scope}[shift={(0,#1)}]

    \pgfmathsetmacro{\thetaConnectorOffset}{#2+0.07}

    \draw[
      -{Stealth[length=1.35mm,width=0.90mm]},
      PFTheta,
      dash pattern=on 1.5pt off 1.0pt,
      line cap=round,
      line width=0.68pt
    ]
      ({\xTzero+\thetaConnectorOffset},5.14)
      --
      ({\xTmid-\thetaConnectorOffset},5.14);

    \draw[
      -{Stealth[length=1.35mm,width=0.90mm]},
      PFTheta,
      dash pattern=on 1.5pt off 1.0pt,
      line cap=round,
      line width=0.68pt
    ]
      ({\xTmid+\thetaConnectorOffset},5.14)
      --
      ({\xTone-\thetaConnectorOffset},5.14);

    \pfTargetCircle[#2]{\xTzero}{5.14}{5}
    \pfTargetCircle[#2]{\xTmid}{5.14}{34}
    \pfTargetCircle[#2]{\xTone}{5.14}{22}

    \node[
      font=\fontsize{6.5}{7.0}\selectfont,
      text=PFTheta!88!black
    ] at (17.48,5.09)
      {$\bm\theta_1$};

    \node[
      font=\fontsize{6.5}{7.0}\selectfont,
      text=PFTheta!88!black
    ] at (10.48,5.09)
      {$\bm\theta_0$};

  \end{scope}%
}

\tikzset{
  pf graph arrow/.style={
    -{Stealth[length=1.15mm,width=0.82mm]},
    draw=PFInk!85,
    line width=0.65pt,
    line cap=round,
    shorten <=1.15mm,
    shorten >=1.35mm
  }
}

\newcommand{\pfTargetGraph}[2]{%
  \begin{scope}[shift={(#1,#2)}]

    \coordinate (g1) at (-0.60, 0.10);
    \coordinate (g2) at ( 0.00, 0.60);
    \coordinate (g3) at ( 0.60, 0.10);
    \coordinate (g4) at ( 0.32,-0.52);
    \coordinate (g5) at (-0.32,-0.52);

    \draw[pf graph arrow] (g1)--(g2);
    \draw[pf graph arrow] (g2)--(g3);
    \draw[pf graph arrow] (g3)--(g4);
    \draw[pf graph arrow] (g1)--(g5);
    \draw[pf graph arrow] (g5)--(g4);
    \draw[pf graph arrow] (g2)--(g4);


\node[
  circle,
  fill=green!55!black,
  draw=PFInk!80,
  minimum size=0.24cm,
  inner sep=0
] at (g1) {};

\node[
  circle,
  fill=red!65!black,
  draw=PFInk!80,
  minimum size=0.24cm,
  inner sep=0
] at (g2) {};

\node[
  circle,
  fill=yellow!50!black,
  draw=PFInk!80,
  minimum size=0.24cm,
  inner sep=0
] at (g3) {};

\node[
  circle,
  fill=black!62,
  draw=PFInk!80,
  minimum size=0.24cm,
  inner sep=0
] at (g4) {};

\node[
  circle,
  fill=green!55!black,
  draw=PFInk!80,
  minimum size=0.24cm,
  inner sep=0
] at (g5) {};
  \end{scope}%
}
\resizebox{1.02\textwidth}{!}{
\begin{tikzpicture}[x=0.61cm,y=0.61cm,line join=round]

\def\yPanelBottom{1.1} 
\def\yPanelTop{11.95}   

\path[pf panel]
  (0.03,\yPanelBottom) rectangle (7.10,\yPanelTop);

\path[pf panel]
  (7.40,\yPanelBottom) rectangle (17.60,\yPanelTop);

\path[pf panel]
  (17.90,\yPanelBottom) rectangle (25.93,\yPanelTop);

\node[
  pf title,
  font=\bfseries\fontsize{7}{8}\selectfont
] at (0.16,11.60)
  {(a) Coupled polar--spectral endpoints};

\node[
  pf title,
  font=\bfseries\fontsize{7}{8}\selectfont
] at (7.54,11.60)
  {(b) Four-channel geometry-aware flow matching};

\node[
  pf title,
  font=\bfseries\fontsize{7}{8}\selectfont
] at (18.04,11.60)
  {(c) Generation and reconstruction};


\node[font=\fontsize{5.9}{6.4}\selectfont,text=PFInk] at (1.62,9.96)
  {$\bM_0=\bA_0+\sigma\bF$};
\pfTargetHeatmap{0.50}{7.72}{PFQp}{7}
\node[font=\fontsize{5.5}{5.8}\selectfont,text=PFInk] at (0.29,8.61) {$i$};
\node[font=\fontsize{5.5}{5.8}\selectfont,text=PFInk] at (1.38,7.49) {$j$};
\node[font=\fontsize{5.5}{6.0}\selectfont,text=PFInk!88]
  at (1.38,7.1) {noisy adjacency};
\draw[pf flow] (2.31,8.60)--(2.81,8.60); 
\pfTargetPhi{3.30}{8.60}
\draw[pf flow] (3.7,8.60)--(4.2,8.60); 
\pfTargetChannels{5.5}{8.4}{0}

\begin{scope}[shift={(0,-0.73)}]

\node[font=\fontsize{5.9}{6.4}\selectfont,text=PFInk]
  at (1.62,5.23)
  {$\bM_1=\bA_1+\bE_{\varepsilon}$};

\pfTargetHeatmap{0.50}{3.05}{PFQp}{39}

\node[font=\fontsize{5.4}{5.8}\selectfont,text=PFInk]
  at (0.29,3.94) {$i$};

\node[font=\fontsize{5.4}{5.8}\selectfont,text=PFInk]
  at (1.38,2.82) {$j$};

\node[font=\fontsize{5.5}{6.0}\selectfont,text=PFInk!88]
  at (1.38,2.50) {target adjacency};

\draw[pf flow] (2.31,3.93)--(2.81,3.93);

\pfTargetPhi{3.30}{3.93}

\draw[pf flow] (3.70,3.93)--(4.20,3.93);

\pfTargetChannels{5.50}{3.80}{1}

\end{scope}

\def\xTzero{11.50}
\def\xTmid{14.12}
\def\xTone{16.65}

\draw[
  -{Stealth[length=2.1mm,width=1.45mm]},
  PFInk!82,
  line width=0.90pt
]
  (\xTzero,10.40)--(\xTone,10.40);

\draw[PFInk!82,line width=0.70pt]
  (\xTzero,10.29)--(\xTzero,10.51);
\draw[PFInk!82,line width=0.70pt]
  (\xTmid,10.29)--(\xTmid,10.51);

\node[
  font=\fontsize{6.0}{6.5}\selectfont,
  text=PFInk
] at (\xTzero,10.76)
  {$t=0$};

\node[
  font=\fontsize{5.5}{6.4}\selectfont,
  text=PFInk
] at (\xTmid,10.76)
  {$\bZ_t$};

\node[
  font=\fontsize{6.0}{6.5}\selectfont,
  text=PFInk
] at (\xTone,10.76)
  {$t=1$};

\node[
  pf lane head,
  anchor=east,
  text=PFQp,
  font=\bfseries\fontsize{6}{6.8}\selectfont,
  inner sep=0pt
] (qp-head) at (11,9.98)
  {$\bQ_{p,t}$: Stiefel geodesic};


\path[
  top color=PFQp!6,
  bottom color=PFQp!17,
  draw=PFQp!30,
  line width=0.45pt,
  line join=round
]
  (10.8,9.72)
  .. controls (10.81,9.94) and (11.55,9.92)
     .. (12.47,9.76)
  .. controls (12.95,9.68) and (13.35,9.98)
     .. (13.85,9.94)
  .. controls (14.70,9.87) and (15.58,9.69)
     .. (16.90,9.79)
  -- (17.20,9.64)
  -- (17.00,9.18)
  .. controls (16.15,9.30) and (15.30,9.22)
     .. (14.55,9.15)
  .. controls (14.30,9.13) and (14.08,9.01)
     .. (13.82,9.08)
  .. controls (13.43,9.20) and (13.15,9.33)
     .. (12.72,9.28)
  .. controls (12.00,9.18) and (11.25,9.17)
     .. (10.55,9.25)
  -- cycle;

\draw[
  -{Stealth[length=1.55mm,width=1.05mm]},
  PFQp,
  dashed,
  line width=0.72pt
]
  (11.65,9.50)
  .. controls (12.20,9.94) and (12.80,9.76)
  .. (13.20,9.58);

\draw[
  -{Stealth[length=1.55mm,width=1.05mm]},
  PFQp,
  dashed,
  line width=0.72pt
]
  (14.45,9.54)
  .. controls (15.05,9.29) and (15.60,9.29)
  .. (16.05,9.43);

\pfTargetFrame{\xTzero}{9.51}{PFQp}{-6}
\pfTargetFrame{\xTmid}{9.55}{PFQp}{3}
\pfTargetFrame{\xTone}{9.52}{PFQp}{-4}

\node[
  font=\fontsize{6.5}{7.0}\selectfont,
  text=PFQp
] at (17.58,9.48)
  {$\bQ_{p,1}$};

\node[
  font=\fontsize{6.5}{7.0}\selectfont,
  text=PFQp
] at (10.48,9.48)
  {$\bQ_{p,0}$};


\node[
  pf lane head,
  anchor=east,
  text=PFLambda!80!black,
  font=\bfseries\fontsize{6}{6.8}\selectfont,
  inner sep=0pt
] (qp-head) at (10.9,8.75)
  {$\lambda_{p,t}$: Euclidean path};


\draw[
  -{Stealth[length=1.65mm,width=1.10mm]},
  PFLambda,
  dashed,
  line width=0.82pt
]
  (11.80,8.19)--(16.02,8.19);

\fill[PFLambda] (11.80,8.19) circle (1.40pt);

\pfTargetSpectrum{\xTzero}{7.9}{0.43}{0.94}{0.68} 
\pfTargetSpectrum{\xTmid}{7.9}{0.34}{0.72}{0.48}
\pfTargetSpectrum{\xTone}{7.9}{0.47}{0.96}{0.70}

\node[
  font=\fontsize{6.5}{7.0}\selectfont,
  text=PFLambda!78!black
] at (17.48,8.11)
  {$\blambda_{p,1}$};

\node[
  font=\fontsize{6.5}{7.0}\selectfont,
  text=PFLambda!78!black
] at (10.48,8.11)
  {$\blambda_{p,0}$};

  
\node[
  pf lane head,
  anchor=east,
  text=PFQu,
  font=\bfseries\fontsize{6}{6.8}\selectfont,
  inner sep=0pt
] (qp-head) at (11,7.28)
  {$\bQ_{u,t}$: Stiefel geodesic};

\begin{scope}[shift={(0,0.15)}]  

\path[
  top color=PFQu!6,
  bottom color=PFQu!17,
  draw=PFQu!30,
  line width=0.45pt,
  line join=round
]
  (10.35,6.64)
  .. controls (10.78,6.86) and (11.55,6.84)
     .. (12.47,6.68)
  .. controls (12.95,6.60) and (13.35,6.90)
     .. (13.85,6.86)
  .. controls (14.70,6.79) and (15.58,6.61)
     .. (16.90,6.71)
  -- (17.20,6.56)
  -- (17.00,6.10)
  .. controls (16.15,6.22) and (15.30,6.14)
     .. (14.55,6.07)
  .. controls (14.30,6.05) and (14.08,5.93)
     .. (13.82,6.00)
  .. controls (13.43,6.12) and (13.15,6.25)
     .. (12.72,6.20)
  .. controls (12.00,6.10) and (11.25,6.09)
     .. (10.55,6.17)
  -- cycle;

\draw[
  -{Stealth[length=1.55mm,width=1.05mm]},
  PFQu,
  dashed,
  line width=0.72pt
]
  (11.65,6.42)
  .. controls (12.20,6.76) and (12.80,6.67)
  .. (13.20,6.52);

\draw[
  -{Stealth[length=1.55mm,width=1.05mm]},
  PFQu,
  dashed,
  line width=0.72pt
]
  (14.45,6.48)
  .. controls (15.05,6.17) and (15.60,6.18)
  .. (16.05,6.34);

\pfTargetFrame{\xTzero}{6.43}{PFQu}{-6}
\pfTargetFrame{\xTmid}{6.47}{PFQu}{4}
\pfTargetFrame{\xTone}{6.43}{PFQu}{-4}

\end{scope}

\node[
  font=\fontsize{6.5}{7.0}\selectfont,
  text=PFQu
] at (17.48,6.60)
  {$\bQ_{u,1}$};

\node[
  font=\fontsize{6.5}{7.0}\selectfont,
  text=PFQu
] at (10.48,6.60)
  {$\bQ_{u,0}$};


\node[
  pf lane head,
  anchor=east,
  text=PFTheta!88!black,
  font=\bfseries\fontsize{6}{6.8}\selectfont,
  inner sep=0pt
] (qp-head) at (11.5,5.95)
  {$\bm\theta_t$: shortest circular path};





\pfTargetCircularPath{0.21}{0.42}


\begin{scope}[yshift=1mm]
\def\xVelocityCenter{\xTmid}
\node[
  pf box,
  minimum width=3.30cm,
  minimum height=0.45cm,
  inner xsep=2pt,
  inner ysep=2pt,
  font=\bfseries\fontsize{6.5}{6.9}\selectfont
] (joint) at (\xVelocityCenter,3.9)
  {joint velocity field $\hat{\bv}_{\phi}(\bZ_t,t)$};

\draw[
  -{Stealth[length=1.25mm,width=0.85mm]},
  draw=PFInk!82,
  line width=0.62pt,
  line cap=round,
  shorten <=0pt,
  shorten >=0pt,
]
  (\xVelocityCenter,4.72)--(joint.north);

\tikzset{
  pf velocity head/.style={
    rounded corners=2.2pt,
    minimum width=0.72cm,
    minimum height=0.35cm,
    inner xsep=1.6pt,
    inner ysep=0.6pt,
    font=\fontsize{6.5}{7.0}\selectfont
  }
}

\def\yVelocityHead{2.75}

\node[
  pf velocity head,
  fill=PFQp,
  draw=PFQp!70!black,
  text=PFInk
] (vp) at (12.35,\yVelocityHead)
  {$\widehat{\mathbf V}_{p,t}$};

\node[
  pf velocity head,
  fill=PFLambda!78,
  draw=PFLambda!68!black,
  text=PFInk
] (vl) at (13.68,\yVelocityHead)
  {$\widehat{\mathbf v}_{\lambda,t}$};

\node[
  pf velocity head,
  fill=PFQu,
  draw=PFQu!70!black,
  text=PFInk
] (vu) at (15.02,\yVelocityHead)
  {$\widehat{\mathbf V}_{u,t}$};

\node[
  pf velocity head,
  fill=PFTheta!84,
  draw=PFTheta!70!black,
  text=PFInk
] (vt) at (16.35,\yVelocityHead)
  {$\widehat{\mathbf v}_{\theta,t}$};

\draw[
  pf color flow,
  draw=PFQp,
  shorten <=-0.2pt,
  shorten >=0pt
]
  (joint.south -| vp.north) -- (vp.north);

\draw[
  pf color flow,
  draw=PFLambda,
  shorten <=-0.2pt,
  shorten >=0pt
]
  (joint.south -| vl.north) -- (vl.north);

\draw[
  pf color flow,
  draw=PFQu,
  shorten <=-0.2pt,
  shorten >=0pt
]
  (joint.south -| vu.north) -- (vu.north);

\draw[
  pf color flow,
  draw=PFTheta,
  shorten <=-0.2pt,
  shorten >=0pt
]
  (joint.south -| vt.north) -- (vt.north);

\node[
  pf box,
  minimum width=3.30cm,
  minimum height=0.35cm,
  inner xsep=2pt,
  inner ysep=0.4pt,
  rounded corners=2.2pt,
  font=\bfseries\fontsize{6.4}{6.9}\selectfont
] (loss) at (\xVelocityCenter,1.7)
  {$\mathcal L_{\mathrm{FM}}(\phi)$ in \cref{eq:fm_loss}};

\draw[pf flow] (vp.south)--(vp.south |- loss.north);
\draw[pf flow] (vl.south)--(vl.south |- loss.north);
\draw[pf flow] (vu.south)--(vu.south |- loss.north);
\draw[pf flow] (vt.south)--(vt.south |- loss.north);

\end{scope}
\node[font=\bfseries\fontsize{8.2}{9.0}\selectfont,text=PFInk]
  (zstart) at (18.58,10.48) {$\mathbf Z_0$};
\node[font=\bfseries\fontsize{8.2}{9.0}\selectfont,text=PFInk]
  (zend) at (23.46,10.48) {$\widehat{\mathbf Z}_1$};
\draw[pf flow] (zstart.east)--(zend.west);
\node[font=\fontsize{5.8}{6.3}\selectfont,text=PFInk]
  at (20.88,10.89) {integrate learned field};
\node[font=\fontsize{6.8}{7.5}\selectfont,text=PFInk,fill=white,inner sep=1pt]
  at (20.98,9.75)
  {$\displaystyle \frac{\ud \bZ_{t}}{\ud t} = \bv_{\phi^{\star}}(\bZ_{t},t) $};

\node[
  pf box,
  minimum width=1.06cm,
  minimum height=0.54cm,
  inner sep=1.2pt
] (pk) at (21.35,8.08)
  {\bfseries\fontsize{7.5}{8.1}\selectfont
   $\widehat{\bP}_k$};

\node[
  pf box,
  minimum width=1.06cm,
  minimum height=0.54cm,
  inner sep=1.2pt
] (uk) at (23.46,8.08)
  {\bfseries\fontsize{7.5}{8.1}\selectfont
   $\widehat{\bU}_k$};

\coordinate (decodebranch) at (23.46,9.20);

\draw[
  PFInk!88,
  line width=0.72pt
]
  (zend.south)--(decodebranch);

\draw[pf flow]
  (decodebranch)--(uk.north);

\draw[
  pf flow,
  rounded corners=1.5pt
]
  (decodebranch) -| (pk.north);


\coordinate (scoremerge) at (19.35,7.28);

\coordinate (pkdrop) at (pk.south |- scoremerge);
\coordinate (ukdrop) at (uk.south |- scoremerge);

\draw[
  PFInk!88,
  line width=0.72pt,
  rounded corners=1.5pt
]
  (uk.south)
  -- (ukdrop)
  -- (scoremerge);

\draw[
  PFInk!88,
  line width=0.72pt
]
  (pk.south)--(pkdrop);

\node[
  font=\fontsize{7.1}{7.7}\selectfont,
  text=PFInk,
  fill=white,
  inner sep=1pt
] (scoreeq) at (19.35,6.43)
  {$\widehat{\mathbf S}
    =
    \widehat{\mathbf U}_k
    \widehat{\mathbf P}_k$};

\draw[pf flow]
  (scoremerge)--(scoreeq.north);



\begin{scope}[shift={(18.30,4.35)},scale=0.92]
  \pfTargetHeatmap{0}{0}{PFLambda}{17}

  \coordinate (scoremap-west) at (0,0.88);
  \coordinate (scoremap-east) at (1.76,0.88);
\end{scope}


\begin{scope}[xshift=-2mm]

\node[
  pf box,
  minimum width=0.95cm,
  minimum height=0.67cm,
  inner sep=1.5pt,
  font=\fontsize{5.5}{6.1}\selectfont
] (topm) at (21.58,5.13)
  {discretization};

\node[
  font=\fontsize{7.0}{7.6}\selectfont,
  text=PFInk
] at (23.76,6.35)
  {$\widehat{\mathbf A}$};

\begin{scope}[shift={(22.92,4.35)},scale=0.92]
  \pfTargetHeatmap{0}{0}{PFQp}{53}

  \coordinate (binary-west) at (0,0.88);
  \coordinate (binary-east) at (1.76,0.88);
\end{scope}

\node[
  font=\fontsize{5.4}{5.9}\selectfont,
  text=PFInk!88
] (binarylabel) at (24.00,4.08)
  {binary adjacency};

\node[
  font=\fontsize{5.8}{6.3}\selectfont,
  text=PFInk,
  align=center
] (graphlabel) at (25.70,6.22)
  {directed graph};

\pfTargetGraph{25.62}{5.07}

\coordinate (graph-input) at (24.90,5.13);

\end{scope}


\draw[pf flow]
  (scoremap-east |- topm.west)
  -- (topm.west);

\draw[pf flow]
  (topm.east)
  -- (binary-west |- topm.east);

\draw[pf flow]
  (binary-east |- graph-input)
  -- (graph-input);


\def\yLabelBottom{1.45}

\pgfmathsetmacro{\yLabelCenter}{\yLabelBottom+0.536}

\pgfmathsetmacro{\yLabelTitle}{\yLabelCenter+0.90}

\def\xLabelCenter{22.00}

\node[
  draw=PFMuted!65,
  dashed,
  dash pattern=on 2.0pt off 1.5pt,
  rounded corners=4pt,
  line width=0.55pt,
  minimum width=3.25cm,
  minimum height=0.75cm,
  inner sep=0pt
] (labelcontainer) at (\xLabelCenter,\yLabelCenter) {};

\node[
  font=\fontsize{5.8}{6.3}\selectfont,
  text=PFInk,
  fill=white,
  inner xsep=2.5pt,
  inner ysep=0.7pt
] (labeltitle) at (\xLabelCenter,\yLabelTitle)
  {optional node labels};

\draw[
  -{Stealth[length=1.15mm,width=0.78mm]},
  draw=PFMuted!72,
  dash pattern=on 1.5pt off 1.15pt,
  line width=0.52pt,
  line cap=round,
  shorten <=1.2pt,
  shorten >=1.5pt
]
  (binarylabel.south)
  --
  (labeltitle.north);

\node[
  circle,
  fill=green!55!black,
  draw=PFInk!75,
  text=white,
  minimum size=0.47cm,
  inner sep=0pt,
  font=\bfseries\fontsize{6.8}{7.2}\selectfont
] at (20,\yLabelCenter) {A};

\node[
  circle,
  fill=red!65!black,
  draw=PFInk!75,
  text=white,
  minimum size=0.47cm,
  inner sep=0pt,
  font=\bfseries\fontsize{6.8}{7.2}\selectfont
] at (21,\yLabelCenter) {B};

\node[
  circle,
  fill=yellow!50!black,
  draw=PFInk!75,
  text=white,
  minimum size=0.47cm,
  inner sep=0pt,
  font=\bfseries\fontsize{6.8}{7.2}\selectfont
] at (22,\yLabelCenter) {C};

\node[
  circle,
  fill=black!62,
  draw=PFInk!75,
  text=white,
  minimum size=0.47cm,
  inner sep=0pt,
  font=\bfseries\fontsize{6.8}{7.2}\selectfont
] at (23,\yLabelCenter) {D};

\node[
  font=\fontsize{7.0}{7.5}\selectfont,
  text=PFInk
] at (24,\yLabelCenter)
  {$\cdots$};

\end{tikzpicture}
}
\endgroup
\vspace{-7mm}
\caption{\small Overview of the proposed PolarFlow framework.}
\label{fig:polarflow_overview}
\end{figure*}

\subsection{Joint geometry-aware velocity matching}
\label{sec:flow_objective}
PolarFlow models the joint evolution of the four polar–spectral channels using a single time-dependent velocity field. For a sampled source–target pair, the geometry-aware path at time $t$ provides the conditional velocity target $ \bv_{t}(\bZ_{t} |\bZ_{0}, \bZ_{1}) = \big(\dot{\bQ}_{p,t}, \dot{\blambda}_{p,t}^{(k)} ,\dot{\bQ}_{u,t},\dot{\btheta}_{t}^{(k)}\big)$.
A shared network receives the complete intermediate state $\bZ_t$ and time $t$ and jointly predicts the four velocity components:
\begin{equation}
    \hat{\bv}_\phi(\bZ_t,t)=
    (\widehat \bV_{p,t},\widehat{\bv}_{\lambda,t}, \widehat \bV_{u,t}, \widehat{\bv}_{\theta,t})
    \in \mathcal{T}_{\bZ_t}\mathcal M_k.
    \label{eq:joint_velocity}
\end{equation}
Here $\widehat \bV_{c,t}$ denotes the velocity prediction associated with the Stiefel component $\bQ_{c,t}$ after projection onto $\mathcal{T}_{\bQ_{c,t}}\St(N,k)$ for $c\in \{p,u\}$.
The associated product tangent space is
\begin{equation}
\mathcal{T}_{\bZ_t}\mathcal M_k=
\mathcal{T}_{\bQ_{p,t}}\St(N,k)\times \mathbb R^k \times \mathcal{T}_{\bQ_{u,t}}\St(N,k)  \times \mathcal{T}_{\btheta_t^{(k)}}\Theta_{k}.
\end{equation}
All four velocity components are predicted from the full state $\bZ_{t}$, rather than from their corresponding channels in isolation. Thus, each prediction may depend on both the scaling and orientation. 

Let $e_{\phi}(\bA_{\epsilon},t)$ be the \emph{velocity approximation error} $\bv_{t}(\bZ_{t} |\bZ_{0}, \bZ_{1})-\hat{\bv}_\phi(\bZ_t,t)$. The network is trained by matching minimizing the expected approximation error for velocity matching:
\begin{align}
\mathcal L_{\mathrm{FM}}(\phi)
& =\mathbb{E}\norm{e_{\phi}(\bA_{\epsilon},t)}^2 \\& = \mathbb E\big[
\lVert\widehat \bV_{p,t}-\dot \bQ_{p,t}\rVert_F^2
 +\lVert\widehat{\mathbf v}_{\lambda,t}
          -\dot{\blambda}_{p,t}^{(k)}\rVert_2^2
 + \lVert\widehat \bV_{u,t}-\dot \bQ_{u,t}\rVert_F^2
 +\lVert\widehat{\bv}_{\theta,t}
          -\dot{\btheta}_t^{(k)}\rVert_2^2 \big]
\label{eq:fm_loss}
\end{align}
where the expectation is taken over the source–target endpoint sampling procedure and $t\sim \text{Unif}([0,1])$. This is a direct velocity-prediction objective. The endpoints $\bZ_{0}$ and $\bZ_{1}$ are used only to construct $\bZ_{t}$ and its conditional velocity target, while the network receives only $(Z_{t},t)$. Unlike discrete graph flow matching, which parameterizes categorical transition dynamics over categorical node-pair variables, PolarFlow learns a continuous vector field over the joint polar–spectral state, thereby modeling the coordinated evolution of global scaling and orientation information.

\paragraph{Theoretical analysis}

We have the following result regarding the stability of PolarFlow targets. See \cref{sec:pf} for the proof. 

\begin{lightgrey}
\begin{Theorem}[Stability of PolarFlow]\label{thm:stability}
Suppose $\bA_{\varepsilon}$ is regular (formally defined in \cref{sec:pf}) and that, for a fixed source endpoint, the approximation error
$e_{\phi}(\bA_{\epsilon}',t)$
is locally Lipschitz with respect to $\bA'_\varepsilon$ in a neighborhood of $\bA_\varepsilon$, uniformly over $t\in[0,1]$.
Then there exist constants $r>0$ and $C>0$ such that, for every $\widetilde{\bA}_{\varepsilon}$ satisfying $ \|\widetilde{\bA}_\varepsilon-\bA_\varepsilon\|_2 < r$, the following holds
\begin{equation}
    \sup_{t\in[0,1]}
    \left\|
        \widehat{\mathbf v}_\phi\left(\bZ_t(\widetilde{\bA}_\varepsilon),t
        \right)
        -
        \widehat{\mathbf v}_\phi\left(
          \bZ_t(\bA_\varepsilon),t
        \right)
    \right\|
    \leq
    C
    \left\|
        \widetilde{\bA}_\varepsilon-\bA_\varepsilon
    \right\|_2.
    \label{eq:learned_velocity_stability}
\end{equation}
\end{Theorem}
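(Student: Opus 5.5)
The natural decomposition is the triangle inequality
\[
\hat{\bv}_\phi(\bZ_t(\widetilde{\bA}_\varepsilon),t)-\hat{\bv}_\phi(\bZ_t(\bA_\varepsilon),t)
=\big[\bv_t(\cdot\mid \bZ_0,\widetilde{\bZ}_1)-\bv_t(\cdot\mid \bZ_0,\bZ_1)\big]
-\big[e_\phi(\widetilde{\bA}_\varepsilon,t)-e_\phi(\bA_\varepsilon,t)\big],
\]
with the source endpoint $\bZ_0$ held fixed. By the hypothesis, the second bracket is bounded by $L_e\|\widetilde{\bA}_\varepsilon-\bA_\varepsilon\|_2$ uniformly in $t$ for $r$ small. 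It therefore suffices to show that the \emph{conditional target velocity} is locally Lipschitz in the data endpoint, uniformly in $t$. The plan has three steps.

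\textbf{Step 1: the map $\Phi_k$.} First show that $\Phi_k$ is locally Lipschitz at $\bA_\varepsilon$. I take ``regular'' to mean the following.
\begin{itemize}
\item[(a)] There is a spectral gap $\lambda_{p,k}>\lambda_{p,k+1}$, and $\bA_\varepsilon$ is invertible, so $\bP\succ 0$.
\item[(b)] The retained eigenvalues of $\bP$ and the retained eigenangles of $\bU$ are simple, or at least separated from the discarded ones.
\item[(c)] The orientation scores $s_i$ have a strict gap at the $k$-th selected mode, so $\calI_u$ is locally constant.
\item[(d)] No angle difference sits at the wrap cut $\pm\pi$, and the Stiefel endpoints lie inside the injectivity radius, so $\Log$ is smooth.
\end{itemize}
Under (a), the polar factors $\bP=(\bA^*\bA)^{1/2}$ and $\bU=\bA\bP^{-1}$ are analytic in $\bA$ near $\bA_\varepsilon$. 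For $\bP$ this follows from analyticity of the square root on positive definite matrices, with Lipschitz constant governed by $\sigma_{\min}(\bA_\varepsilon)$.

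Davis--Kahan and Weyl bounds then give Lipschitz control of $\blambda_p^{(k)}$ and of $\bQ_p$, after fixing a sign/phase gauge continuously, for example by a sign convention on a nonvanishing coordinate. Applying Bauer--Fike and Davis--Kahan to the normal matrix $\bU$ gives control of the eigenangles and eigenvectors. Since the scores $s_i$ depend continuously on $(\bU,\bP)$, the strict gap in (c) persists for small $r$, so the same index set is selected. The real frame $\bQ_u$ is built from conjugate pairs by taking real and imaginary parts, which is a smooth operation. Hence $\|\widetilde{\bZ}_1-\bZ_1\|\le C_\Phi\|\widetilde{\bA}_\varepsilon-\bA_\varepsilon\|_2$.

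\textbf{Step 2: the path velocities.} Next, show that the velocities are Lipschitz in $\bZ_1$, uniformly in $t$.
\begin{itemize}
\item The $\blambda$ and $\btheta$ velocities are constant in $t$. By (d), wrap is the identity in a neighborhood, so both are affine in $\bZ_1$ with Lipschitz constant $1$.
\item For the Stiefel channels, $\dot\bQ_{c,t}=d\Exp_{\bQ_{c,0}}\big|_{t\xi}[\xi]$ with $\xi=\Log_{\bQ_{c,0}}(\bQ_{c,1})$. Because $\Log_{\bQ_{c,0}}$ is smooth near $\bQ_{c,1}$ and $(t,\xi)\mapsto d\Exp|_{t\xi}[\xi]$ is smooth on the compact set $[0,1]\times\overline{B}$, a mean-value argument gives a Lipschitz constant independent of $t$.
\end{itemize}
Composing with Step 1 yields the bound for the first bracket. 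Taking $C=C_v C_\Phi+L_e$ and $r$ as the minimum of the neighborhood radii completes the argument.

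\textbf{Main obstacle.} The hard step is Step 1 for the orientation channel. Three things must be controlled simultaneously: the eigenbasis of the orthogonal factor, the conjugate-closed discrete selection of $\calI_u$, and the gauge ambiguity of eigenvectors, including phases within conjugate pairs. I would handle these by building all three separations (the $\bU$-angle gap, the score gap and the wrap margin) into the definition of regularity, and then invoking the analytic implicit-function/Kato perturbation theory for isolated eigenprojections rather than working with individual eigenvectors directly.
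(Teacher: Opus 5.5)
Your proposal is correct and follows essentially the same route as the paper. Both arguments first establish stability of the endpoint map $\Phi_k$: the polar factors, the retained spectral data, and persistence of the score-selection margin. They then show that the conditional velocities depend Lipschitz-continuously on the endpoint, uniformly in $t$, using geometric regularity, and finish with the triangle inequality and the Lipschitz error hypothesis, giving $C=L_{\dot Z}C_\Phi+L_e$.

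The only differences are in the tools for the endpoint step. You argue qualitatively, via analyticity of the polar factors, Kato-type eigenprojection theory and a coordinate sign/phase gauge; that gauge presumes simple retained eigenvalues. The paper instead uses explicit perturbation inequalities (Chen, Li, Mirsky, Davis--Kahan, Hoffman--Wielandt) with Procrustes alignment inside each retained eigenspace, which makes $C_\Phi$ explicit in $N$, $\alpha_\varepsilon$, $\gamma_{p,\varepsilon}$ and $\gamma_{u,\varepsilon}$.
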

\end{lightgrey}

\Cref{thm:stability} establishes the local perturbation stability of PolarFlow at a spectrally and geometrically regular endpoint: small input perturbations induce proportionally small changes in the conditional paths and velocity targets. This stability extends to the learned velocity field, which is preferred to avoid pathological behavior. It also provides theoretical support for endpoint regularization as a mechanism for stabilizing flow-matching targets. This interpretation is consistent with the generation equality gains obtained under suitable regularization strengths in \cref{sec:regularization_ablation}.

\subsection{Structure reconstruction and discretization}
\label{sec:reconstruction}
For generation, PolarFlow first samples a source matrix $\bM_{0}$ and computes its polar–spectral representation $\bZ_{0}=\Phi_{k}(\bM_{0})$ in \cref{eq:source}. Start from $\bZ_{0}$, PolarFlow integrates the ODE induced by the learned velocity field $\bv_{\phi^{\star}}$ from $t=0$ to $t=1$, i.e., $\ud \bZ_{t}/\ud t = \bv_{\phi^{\star}}(\bZ_{t},t)$ with $\bZ_{t=0} = \bZ_{0}$,
where $\phi^{\star}$ denotes the learned parameters. This produces the terminal polar–spectral state $\widehat{\bZ}_{1}= (\widehat{\bQ}_{p},\widehat{\blambda}_p^{(k)}, \widehat{\bQ}_u,\widehat{\boldsymbol\theta}^{(k)}) \in \calM_{k}$.
The terminal state is first decoded into a continuous matrix representation.
The scaling and orientation components are reconstructed respectively as
\begin{align*}
    \widehat{\bP}_{k} = \widehat{\bQ}_p \diag(\widehat{\blambda}_p^{(k)})\widehat{\bQ}_p^* \quad \text{and} \quad \widehat{\bU}_{k}=\widehat{\bQ}_u\diag(e^{\mathrm i{\btheta}^{(k)}})\widehat{\bQ}_u^*.
\end{align*}
The continuous directed-adjacency score matrix is then obtained through the polar product $\widehat{\bS}=\widehat{\bU}_{k}\widehat{\bP}_{k}\in \Real^{N \times N}$. The resulting $\widehat{\bS}$ assigns a continuous score to each ordered node pair and is converted into a binary adjacency matrix through the size-conditioned discretization. To obtain a binary adjacency matrix, PolarFlow estimates the size-conditioned edge density $\rho_{N}$ from the training set, i.e., $\rho_N = \E_{\bA\sim p_{\mathrm{data}}(\cdot\mid N)}[\frac{1}{N(N-1)}\sum_{i\neq j} A_{ij}]$. The corresponding edge budget is $m_N =\operatorname{round}\left(\rho_{N} N(N-1)\right)$. PolarFlow constructs $\widehat{\bA}$ by assigning ones to the $m_N$ largest off-diagonal entries of $\widehat{\bS}$ and zeros to all remaining entries. Self-loops are excluded by setting $\widehat A_{ii}=0$ for all $i$. The resulting $\widehat{\bA}\in\{0,1\}^{N\times N}$ is the binary adjacency matrix of the generated directed graph. \paragraph*{Categorical node labels} We model $p_\psi(\by\mid\widehat{\bA})$ using a separate masked categorical predictor, with training based on discrete flow matching \citep{qin2025defog} and sampling based on confidence-guided iterative unmasking \citep{chang2022maskgit}. The generated adjacency matrix $\widehat{\bA}$ remains fixed throughout label generation; details are provided in \cref{app.label_decoding}.

\section{Expanding Directed-Graph Dataset Diversity}
\label{sec:datasets}
\vspace{-2mm}
Existing benchmarks for directed-graph generation remain limited in structural diversity (e.g., sparsity) and domain coverage, restricting evaluation to a relatively narrow range of directed connectivity patterns. To broaden this scope, we construct two additional benchmark datasets, \emph{Circuits} and \emph{Twitter}, representing engineered functional systems and naturally evolved social systems, respectively. 
\vspace{-2.5mm}
\paragraph{Circuits: engineered functional networks.} We construct the Circuits dataset from two open-source hardware designs: the Ibex RISC-V CPU core and an Ethernet MAC controller. After compiling the designs into FPGA cell-level netlists of AMD Xilinx primitives, we represent each netlist as a directed cell graph and extract radius-two ego-subgraphs by traversing both incoming and outgoing connections while retaining feedback. Each node denotes a primitive cell instance and carries one of seven categorical cell-type labels, while an edge $u\rightarrow v$ indicates that the signal produced by cell $u$ drives an input of cell $v$. Parallel pin-level connections are collapsed into binary edges, and self-loops are removed. The dataset contains 318 graphs, split into 251 training, 41 validation, and 26 test graphs. Further construction details and dataset statistics are provided in \cref{app.Circuit_datasets}.
\vspace{-2.5mm}
\paragraph{Twitter: naturally evolved social networks.} We construct the Twitter dataset from the SNAP \emph{ego-Twitter} collection \citep{mcauley2012learning}, a directed social network collected from public follow lists. Each node represents an anonymized user account, and an edge $u\rightarrow v$ indicates that user $u$ follows user $v$, so in- and out-degree respectively characterize received attention and following activity. To prevent node leakage across splits, we partition the source accounts into disjoint training, validation, and test pools before extracting induced subgraphs through breadth-first expansion over both incoming and outgoing neighborhoods. This procedure yields 4{,}067 weakly connected graphs, divided into 3{,}447 training, 198 validation, and 422 test graphs. Further construction details and split-wise statistics are provided in \cref{app.twitter_datasets}.

Together, Circuits and Twitter broaden evaluation beyond a single directed-graph family by covering complementary connectivity regimes: label-dependent engineered connectivity and heterogeneous asymmetric social structure. Twitter contains no categorical node labels and is therefore evaluated only at the structural level; node-label generation is assessed on labeled datasets.
\vspace{-2mm}
\section{Experiments}
\label{sec:experiments}
\vspace{-2mm}
\subsection{Experimental setup}
\vspace{-2mm}
\paragraph{Datasets} We evaluate PolarFlow on the synthetic Stochastic Block Model (SBM) graphs and four real-world datasets: Visual Genome \citep{krishna2017visualgenome}, TPU Tiles \citep{google2023tpugraphs}, and our new Circuits and Twitter datasets. We follow the splits of \citep{carballocastro2026directo} for established benchmarks. Dataset construction, preprocessing and statistics are detailed in \cref{app.datasets}.
\vspace{-6mm}
\paragraph{Baselines}
We compare PolarFlow against an empirical maximum-likelihood (MLE) baseline over graph sizes, node types, and edge types; DiGress \citep{vignac2023digress}, a discrete diffusion model; DeFoG \citep{qin2025defog}, a discrete flow-matching model; and four DIRECTO variants \citep{carballocastro2026directo} combining DIRECTO or DIRECTO-DD with RRWP or MagLap positional encodings. For TPU Tiles, we also include the DAG-specific generators D-VAE \citep{zhang2019dvae} and LayerDAG \citep{li2025layerdag}.
\vspace{-3mm}
\paragraph{Setup}
We parameterize the joint velocity field $v_{\phi}$ using a residual MLP with a backbone shared across its four components. We randomly initialize the model parameters and train using AdamW optimizer. For each dataset, we report the mean and standard deviation over five independent sampling runs. Implementation and hyperparameter details are provided in \cref{app:implementation}.
\vspace{-2mm}
\paragraph{Evaluation metrics} Following standard protocols
\citep{martinkus2022spectre,thompson2022evaluation,bergmeister2024efficient}, we report the validity--uniqueness--novelty (V.U.N.) ratio and an average ratio based on maximum mean discrepancy (MMD). 
For labeled datasets, joint node--edge alignment is evaluated using Triplet MMD over dataset-specific semantic-tuple frequency distributions and radial basis function (RBF)-kernel MMD over GNN embeddings that jointly encode node labels and directed structure. Higher V.U.N.\ and lower MMD-based values indicate better performance. Full definitions and dataset-specific validity criteria are provided in \cref{app.eval.metric}.

\vspace{-1mm}
\subsection{Generative performance}
\vspace{-1mm}
\paragraph{Structural generation}
 \Cref{tab:structural_perform} reports the average MMD ratio for structural distributional alignment and V.U.N. for sample-level quality. PolarFlow is competitive with strong Directo variants on SBM and Twitter, and achieves the best average ratio on Circuits ($1.70$) with a perfect uniqueness--novelty score. Its largest gains occur on Visual Genome and TPU Tiles, reducing the best Directo ratios from $6.02$ and $18.46$ to $3.32$ and $11.54$, respectively. The V.U.N.\ gap on Visual Genome is consistent with its strict typed-edge validity criterion: because PolarFlow generates structures and labels in two stages without hard constraint enforcement, structurally well-aligned samples may still fail the graph-level test due to type incompatibilities. Overall, PolarFlow combines strong structural alignment with competitive
graph-level performance under the applicable validity, uniqueness, and
novelty criteria, while offering substantial training and sampling
efficiency advantages over discrete baselines (\cref{tab:training_inference_storage}). Complete descriptor-level MMD results are provided in \cref{app.complete_results}.
\vspace{-3mm}
\begin{table*}[!h]
\centering
\caption{Structural performance of directed graph generation methods (mean $\pm$ standard deviation over five sampling runs). Best results are bolded.
}
\label{tab:structural_perform}

\footnotesize
\setlength{\tabcolsep}{3pt}
\renewcommand{\arraystretch}{0.9}
\setlength{\aboverulesep}{0.20ex}
\setlength{\belowrulesep}{0.20ex}
\textbf{(a) Average structural MMD ratio ($\downarrow$)}
\par\vspace{2pt}

\begin{tabular*}{\textwidth}{
    @{\extracolsep{\fill}}lccccc@{}
}
\midrule
\textbf{Model}
& Circuit
& TPU Tiles
& Visual Genome
& SBM
& Twitter
\\
\midrule

\emph{Training set}
& $1.00$
& $1.00$
& $1.00$
& $1.00$
& $1.00$
\\
\midrule

MLE
& $6.49 \pm 0.65$
& $95.54 \pm 0.93$
& $17.00 \pm 0.60$
& $11.60 \pm 0.20$
& $12.59 \pm 0.18$
\\

D-VAE
& --
& OOT
& --
& --
& --
\\

LayerDAG
& --
& $135.14 \pm 10.23$
& --
& --
& --
\\

DiGress
& $2.10 \pm 0.13$
& $27.94 \pm 0.67$
& $17.00 \pm 0.60$
& $3.90 \pm 0.90$
& $1.34 \pm 0.02$
\\

DeFoG
& $1.90 \pm 0.24$
& $26.93 \pm 0.53$
& $10.60 \pm 0.80$
& $4.30 \pm 0.80$
& $1.48 \pm 0.04$
\\

\midrule

\textsc{Directo}-DD RRWP
& $1.95 \pm 0.44$
& $27.02 \pm 0.97$
& $15.30 \pm 0.80$
& $1.70 \pm 0.40$
& \bestcell{1.14 \pm 0.21}
\\

\textsc{Directo}-DD MagLap
& $2.06 \pm 0.38$
& $18.46 \pm 0.87$
& $7.60 \pm 0.70$
& \bestcell{1.50 \pm 0.40}
& $1.31 \pm 0.10$
\\

\textsc{Directo} RRWP
& $1.90 \pm 0.21$
& $25.96 \pm 0.42$
& $12.80 \pm 0.60$
& $1.80 \pm 0.50$
& $1.15 \pm 0.04$
\\

\textsc{Directo} MagLap
& $1.77 \pm 0.25$
& $20.96 \pm 1.35$
& $6.02 \pm 0.51$
& $2.00 \pm 0.30$
& $1.37 \pm 0.11$
\\

\midrule

\rowcolor{gray!10}
PolarFlow
& \bestcell{1.70 \pm 0.08}
& \bestcell{11.54 \pm 0.68}
& \bestcell{3.32 \pm 0.53}
& $2.00 \pm 0.30$
& $1.33 \pm 0.04$
\\

\bottomrule
\end{tabular*}

\vspace{2pt}

\textbf{(b) Validity, uniqueness, and novelty (V.U.N.)
($\%$, $\uparrow$)}
\par\vspace{2pt}

\begin{tabular*}{\textwidth}{
    @{\extracolsep{\fill}}lccccc@{}
}
\midrule
\textbf{Model}
& \shortstack{
    Circuit
  }
& \shortstack{
    TPU Tiles
  }
& \shortstack{
    Visual Genome
  }
& \shortstack{
    SBM
  }
& \shortstack{
    Twitter
  }
\\
\midrule

\emph{Training set}
& $0.00$
& $0.00$
& $0.00$
& $0.00$
& $0.00$
\\
\midrule

MLE
& \bestcell{100.00 \pm 0.00}
& $52.27 \pm 2.32$
& $90.81 \pm 0.92$
& $0.00 \pm 0.00$
& $100.00 \pm 0.00$
\\

D-VAE
& --
& OOT
& --
& --
& --
\\

LayerDAG
& --
& \bestcell{99.93 \pm 1.22}
& --
& --
& --
\\

DiGress
& $98.46 \pm 1.88$
& $86.63 \pm 1.06$
& $96.59 \pm 0.55$
& $41.50 \pm 5.10$
& $100.00 \pm 0.00$
\\

DeFoG
& $99.23 \pm 1.54$
& $79.20 \pm 0.85$
& $97.81 \pm 0.62$
& $37.00 \pm 6.60$
& $100.00 \pm 0.00$
\\

\midrule

\textsc{Directo}-DD RRWP
& $96.17 \pm 3.21$
& $88.09 \pm 2.28$
& $99.75 \pm 0.81$
& $81.50 \pm 3.20$
& $100.00 \pm 0.00$
\\

\textsc{Directo}-DD MagLap
& \bestcell{100.00 \pm 0.00}
& $81.74 \pm 1.69$
& $99.84 \pm 0.06$
& $95.50 \pm 3.70$
& $100.00 \pm 0.00$
\\

\textsc{Directo} RRWP
& $99.23 \pm 1.53$
& $78.34 \pm 1.17$
& \bestcell{99.94 \pm 0.05}
& \bestcell{99.50 \pm 1.00}
& $100.00 \pm 0.00$
\\

\textsc{Directo} MagLap
& $96.17 \pm 3.21$
& $82.69 \pm 0.98$
& $99.08 \pm 0.71$
& $96.50 \pm 2.50$
& $100.00 \pm 0.00$
\\

\midrule

\rowcolor{gray!10}
PolarFlow
& \bestcell{100.00 \pm 0.00}
& $82.45 \pm 1.22$
& $91.65 \pm 1.04$
& $92.50 \pm 3.80$
& $100.00 \pm 0.00$
\\

\bottomrule
\end{tabular*}

\vspace{3pt}

\begin{minipage}{\textwidth}
\footnotesize
\raggedright
\textit{Note.}
In (b), Circuit and Twitter report the percentage of samples that are
both unique and novel, as no binary validity criterion is defined for
these datasets. D-VAE and LayerDAG are DAG-specific and are evaluated only on
TPU Tiles. OOT indicates that a model could not be run within a reasonable
timeframe.
\end{minipage}
\end{table*}
\vspace{-2mm}
\paragraph{Node-label generation} \Cref{tab:node-label_performance}
reports joint node--edge distributional alignment on the three labeled real-world datasets. PolarFlow achieves the lowest RBF MMD on Visual Genome and remains comparable with DiGress, DeFoG, and the Directo variants on Circuits
and TPU Tiles, outperforming MLE on both datasets and LayerDAG on the
latter. Structurally faithful sampled adjacencies provide informative conditioning for label generation, potentially reducing train--inference mismatch and benefiting RBF MMD, whose embeddings jointly encode topology and labels. However, the remaining Triplet MMD gaps indicate that structural fidelity alone does not capture label-conditioned connectivity. Additional label-alignment results are available in \cref{app.complete_results}.

\begin{table*}[t]
\centering
\caption{
Joint node--edge distributional alignment on three labeled real-world
datasets.
}
\label{tab:node-label_performance}

\fontsize{7.5pt}{9pt}\selectfont
\setlength{\tabcolsep}{1pt}
\renewcommand{\arraystretch}{0.9}
\setlength{\aboverulesep}{0.19ex}
\setlength{\belowrulesep}{0.19ex}

\begin{tabular}{@{}l*{6}{c}@{}}
\midrule
&
\multicolumn{2}{c}{\textbf{Circuit}}
&
\multicolumn{2}{c}{\textbf{TPU Tiles}}
&
\multicolumn{2}{c}{\textbf{Visual Genome}}
\\
\cmidrule(lr){2-3}
\cmidrule(lr){4-5}
\cmidrule(lr){6-7}

\textbf{Model}
& \textbf{Triplet MMD}$\downarrow$
& \textbf{RBF MMD}$\downarrow$
& \textbf{Triplet MMD}$\downarrow$
& \textbf{RBF MMD}$\downarrow$
& \textbf{Triplet MMD}$\downarrow$
& \textbf{RBF MMD}$\downarrow$
\\
\midrule

\emph{Training set}
& $0.0117$
& $0.0918$
& $0.00061$
& $0.0056$
& $0.0123$
& $0.0257$
\\
\midrule

MLE
& $0.035 \pm 0.002$
& $0.160 \pm 0.033$
& $0.091 \pm 0.004$
& $0.158 \pm 0.013$
& $0.330 \pm 0.004$
& $0.121 \pm 0.009$
\\

D-VAE
& --
& --
& OOT
& OOT
& --
& --
\\

LayerDAG
& --
& --
& $0.474 \pm 0.011$
& $0.422 \pm 0.077$
& --
& --
\\

DiGress
& $0.018 \pm 0.012$
& $0.097 \pm 0.011$
& $0.005 \pm 0.001$
& $0.016 \pm 0.001$
& $0.230 \pm 0.010$
& $0.083 \pm 0.012$
\\

DeFoG
& $0.021 \pm 0.007$
& $0.098 \pm 0.005$
& $0.004 \pm 0.001$
& $0.017 \pm 0.001$
& $\mathbf{0.016 \pm 0.004}$
& $0.074 \pm 0.002$
\\

\midrule

\textsc{Directo}-DD RRWP
& $\mathbf{0.013 \pm 0.005}$
& $\mathbf{0.087 \pm 0.018}$
& $\mathbf{0.002 \pm 0.000}$
& $\mathbf{0.012 \pm 0.001}$
& $0.027 \pm 0.009$
& $0.053 \pm 0.011$
\\

\textsc{Directo}-DD MagLap
& $0.015 \pm 0.006$
& $0.103 \pm 0.018$
& $0.004 \pm 0.001$
& $0.015 \pm 0.000$
& $0.032 \pm 0.005$
& $0.063 \pm 0.011$
\\

\textsc{Directo} RRWP
& $0.014 \pm 0.006$
& $0.090 \pm 0.01$
& $0.002 \pm 0.000$
& $0.013 \pm 0.001$
& $0.048 \pm 0.016$
& $0.065\pm0.052$
\\

\textsc{Directo} MagLap
& $0.016 \pm 0.005$
& $0.090 \pm 0.010$
& $0.006 \pm 0.001$
& $0.016 \pm 0.001$
& $0.035 \pm 0.005$
& $0.080 \pm 0.007$
\\

\midrule

\rowcolor{gray!12}
PolarFlow
& $0.023 \pm 0.009$
& $0.108 \pm 0.014$
& $0.009 \pm 0.001$
& $0.024 \pm 0.002$
& $0.070 \pm 0.007$
& $\mathbf{0.046 \pm 0.007}$
\\

\bottomrule
\end{tabular}
\vspace{-4mm}
\end{table*}

\vspace{-2mm}
\paragraph{Ablation studies (cf.\ \cref{app.ablation})}\label{sec:ablations}
We examine two design choices supporting PolarFlow's stability and coherence: (i) applying a small diagonal perturbation to the data endpoint before polar--spectral decomposition to mitigate rank and spectral degeneracies, and (ii) jointly constructing the four source channels from a shared noisy matrix to preserve their cross-channel coupling. The diagonal perturbation generally improves structural distributional alignment, reported in \cref{tab:diagonal_shift_ablation}, providing empirical evidence consistent with the stability analysis in \cref{thm:stability} and indicating that appropriately chosen endpoint regularization can improve generation quality. For source construction, deriving the four channels through the polar--spectral decomposition of a shared source matrix yields lower average MMD ratios than a factorized prior, in which the channels are sampled independently and no common matrix is decomposed at $t=0$. Moreover, a coupled variant based on a Gaussian-only source matrix yields results close to those of the structured-template variant, suggesting that the improvement stems primarily from cross-channel coupling rather than template-specific structure, shown in \cref{tab:source_construction_ablation}. Complete formulations and results are provided in \cref{app.ablation}.

\vspace{-1mm}
\subsection{Computational efficiency}
\label{sec:complexity}
\vspace{-1mm}


PolarFlow and discrete flow-matching methods differ in how they represent graph structure and perform sampling. Discrete methods predict dense categorical distributions over all ordered node pairs and, during generation, repeatedly evaluate transition probabilities or rates and update categorical edge states. PolarFlow instead evolves four continuous polar--spectral channels through a joint velocity field, reconstructing and discretizing the adjacency matrix only at the endpoint. To assess the practical consequences of these computational profiles, \cref{tab:training_inference_storage} reports per-epoch
training time and per-graph sampling time. PolarFlow achieves substantially shorter training and sampling times than the evaluated discrete generative methods while maintaining comparable generation quality, an efficiency advantage for sample-intensive tasks such as large-scale evaluation and synthetic-data augmentation.
Additional runtime and resource comparisons are provided in \cref{app.runtime}.

\vspace{-3mm}
\begin{table*}[!htb]
\centering
\setlength{\tabcolsep}{2pt}
\renewcommand{\arraystretch}{0.85}
\caption{Comparison of training time (s/epoch), sampling time (s/graph), and storage (GB).}
\resizebox{\textwidth}{!}{
\begin{tabular}{l|ccc|ccc|ccc}
\midrule
\multirow{2}{*}{Method} & \multicolumn{3}{c|}{Training Time (s)} & \multicolumn{3}{c|}{Sampling Time (s)} & \multicolumn{3}{c}{Storage (GB)} \\
                        & Twitter   & Circuit  & TPU Tiles  & Twitter   & Circuit  & TPU Tiles   & Twitter   & Circuit  & TPU Tiles \\
\midrule
Digress                  & 21.56      &  1.47    & 16.53          & 1.65  &  1.38    & 1.51            & 8.89    & 7.89     & 9.42 \\
DeFoG                    & 16.85      &  1.13    & 13.27          & 1.80   & 1.91    & 1.79            & 8.96    & 8.15     & 9.13 \\
\midrule
\textsc{Directo}-DD RRWP & 29.67      &  1.45    & 22.86          & 1.69   & 1.00    & 1.67            & 18.67    & 17.79   & 19.67 \\
\textsc{Directo}-DD MagLap & 45.19    &  2.09    & 32.39          & 3.77   & 2.17    & 3.41            & 18.63    & 17.79   & 20.07 \\
\textsc{Directo} RRWP      & 29.67    &  1.42    & 22.79         & 3.56   & 1.88    & 3.23            & 18.66    & 18.27   & 19.55 \\
\textsc{Directo} MagLap    & 44.63    &  2.09    & 32.52        & 7.78   & 6.66    & 6.91            & 18.63    & 18.27   & 19.93 \\
\midrule
Ours                       & 3.67    &  1.10   & 3.33         & 0.19   & 0.16     & 0.14           & 1.29     & 1.03    & 1.13  \\
\bottomrule
\end{tabular}
}
\label{tab:training_inference_storage}
\vspace{-4mm}
\end{table*}

\section{Conclusion}
\label{sec:conclusion}
\vspace{-1mm}
We introduce \emph{PolarFlow}, a continuous geometric framework for directed graph generation based on polar decomposition and truncated spectral coordinates. By jointly modeling the orthogonal and positive-semidefinite factors through four coupled channels, PolarFlow replaces edge-wise categorical generation with geometry-aware flow matching over a structured representation of the adjacency matrix. Experiments across synthetic and real-world datasets, including the newly introduced Circuit and Twitter benchmarks, demonstrate that PolarFlow achieves generative quality comparable to competitive discrete methods while requiring substantially less training and sampling time in practice. These results establish polar--spectral modeling as a promising alternative to discrete edge-wise generation and motivate its broader use in application-specific directed graph design.

\section*{Acknowledgments}

ACC was supported by the Swiss National Science Foundation (SNSF) under grant~10001445 and in part by the EPFL AI Center and Swiss AI Initiative PhD Fellowship.


\clearpage

\appendix
\section{Related work}\label[Appendix]{app.related_works}

\paragraph{Graph generative models.}
Autoregressive models build a graph through a sequence of node and edge decisions \citep{you2018graphrnn,liao2019efficient,dai2020BiGG}, while one-shot alternatives include variational autoencoders \citep{kipf2016variational,simonovsky2018graphvae,jin2018JT-VAE,ma2021gfvae,vignac2022topn}, generative adversarial networks \citep{decao2018molgan,krawczuk2021gggan}, and normalizing flows \citep{madhawa2019graphnvp,wehenkel2021graphical,lippe2021categorical,luo2021graphdf}. Diffusion methods first adapted continuous dynamics to graphs \citep{niu2020permutation,luo2021predicting,xu2022geodiff,jo2022GDSS} and then moved to categorical state spaces and their continuous-time and flow-matching variants \citep{austin2021structured,haefeli2022diffusion,vignac2023digress,jo2024Grum,xu2024disco,siraudin2024cometh,campbell2024generative,gat2024discrete,qin2025defog}. Across these families, the generative variables are node categories and edge categories over all candidate pairs.

\paragraph{Spectral and geometric generation.}
Latent graph generators reduce the dimension of the visible state through a learned representation \citep{zhou2024unifying,fu2024hyperbolic,fan2024graphusion,siraudin2026principledlatentdiffusiongraphs}. Spectral methods instead derive coordinates from graph operators.  SPECTRE generates leading Laplacian eigenpairs \citep{martinkus2022spectre}, and GSDM uses low-rank stochastic dynamics on graph spectra \citep{luo2023GSDM}; both
focus on undirected graphs with symmetric operators.  Riemannian generative models show how conditional paths can respect non-Euclidean domains
\citep{chen2024Geomflow,lou2023scaling}.  PolarFlow combines these ideas for an asymmetric adjacency matrix by separating its polar orientation and magnitude.

\paragraph{Directed graph generation.}
DAG generators usually exploit a topological ordering, including D-VAE \citep{zhang2019dvae}, GraphPNAS \citep{li2022graphpnas}, LayerDAG
\citep{li2025layerdag}, and SeaDAG \citep{zhou2024seadag}.  
For general digraphs, DGDK uses a heat-kernel denoising autoencoder \citep{law2025DGDK}, while Directo combines discrete flow matching with asymmetric positional encodings and dual attention \citep{carballocastro2026directo}, and LG-Flow~\citep{siraudin2026principledlatentdiffusiongraphs} uses latent diffusion.  PolarFlow differs in the generative variables themselves: it transports a global polar--spectral state rather than categorical variables indexed by ordered pairs.

\section{Discussions and proofs} \label[Appendix]{sec:pf}
To start with, we introduce \emph{regularity assumptions} on $\bA_\varepsilon=\bA+\bE_\varepsilon=\bU\bP$.

For $1\leq k<N$, let $\mathcal I_p$ and $\mathcal I_u$ index the retained
eigenvalues of $\bP$ and $\bU$, respectively. Define the spectral separations between the retained and discarded spectra of $\bP$ and $\bU$, respectively, by
$\gamma_{p,\varepsilon}:=\min_{\substack{i\in\mathcal I_p, j\notin\mathcal I_p}}|\lambda_i-\lambda_j|$, and $\gamma_{u,\varepsilon}
    :=
    \min_{\substack{i\in\mathcal I_u, j\notin\mathcal I_u}}
    \left|e^{\mathrm{i}\theta_i}-e^{\mathrm{i}\theta_j}\right|.$ We say that $\bA_{\epsilon}$ is \emph{spectrally regular} if the following holds:
\begin{enumerate}
    \item [(i)] The rank margin is positive, $ \alpha_\varepsilon:=\sigma_{\min}(\bA_\varepsilon)>0$.
    \item [(ii)] The relevant spectral gaps are positive, $\gamma_{p,\varepsilon}>0$ and $\gamma_{u,\varepsilon}>0$;
    \item [(iii)] The score-selection margin is positive, $\mu_\varepsilon:=s_{k}-s_{k+1}>0$.
\end{enumerate}

For example, if both $\bU$ and $\bP$ do not have repeated eigenvalues, then $\bA_{\epsilon}$ is spectrally regular. In practice, the matrix $\bA$ can be irregular, particularly when the graph is sparse. Intuitively, the purpose of continuously perturbing $\bA$ matrix with $\bE_{\epsilon}$ is to make $\bA_{\epsilon}$ spectrally regular (with high probability).

Let \(\bZ_0\in\mathcal M_k\) be a fixed source endpoint and let \(\bZ_1=\Phi_k(\bA_\epsilon)\in\mathcal M_k\) be a target endpoint. We say that \(\bA_\epsilon\) is \emph{geometrically regular} (w.r.t.\ \(\bZ_0\)) if the following holds:
\begin{enumerate}[(i)]
\item There exists a neighborhood \(\mathcal U\subset\mathcal M_k\) of \(\bZ_1\) such that for \(c\in\{p,u\}\), 
the Riemannian logarithm
   \[
   (\bQ_{c,0},\bQ_c)\mapsto \operatorname{Log}_{\bQ_{c,0}}(\bQ_c)
   \]
   is uniquely defined and smooth on this neighborhood. 

\item For every retained angle, the endpoint \(\theta_{1}^{(k)}\) satisfies \(\theta_{1}^{(k)}-\theta_{0}^{(k)} \notin (2\mathbb{Z}+1)\pi\).
\end{enumerate}

Intuitively, spectral regularity ensures the stability of the matrix decomposition, while geometric regularity ensures the local uniqueness and smoothness of the geodesic interpolation on the underlying manifolds. 

$\bA_{\epsilon}$ is \emph{regular} if it is both spectrally and geometrically regular. 

\begin{proof}[Proof of \cref{thm:stability}]
Let $ \bA_\varepsilon=\bU\bP$ and $\widetilde{\bA}_\varepsilon =\widetilde{\bU}\widetilde{\bP}$ be the polar decompositions of the two regularized matrices, and set $\delta:=\|\widetilde{\bA}_\varepsilon-\bA_\varepsilon\|_2$ and $\alpha_\varepsilon=\sigma_{\min}(\bA_\varepsilon)>0$. We prove the result by first controlling the perturbations of the polar factors and then propagating these bounds through the retained spectral coordinates, the conditional paths, and finally the learned velocity field.

\emph{Step 1 Polar-factor stability:} We first show that the polar decomposition remains unique in a sufficiently
small neighborhood of $\bA_\varepsilon$, and then control the variation of its two polar factors. Weyl’s singular-value inequality \citep{horn2012matrix} gives $ \sigma_{\min}(\widetilde{\bA}_\varepsilon)\geq \alpha_\varepsilon-\delta$. Hence, by choosing $r<\alpha_\varepsilon/2$, every perturbation with
$\delta<r$ satisfies $\sigma_{\min}(\widetilde{\bA}_\varepsilon)\geq\frac{\alpha_\varepsilon}{2}>0$. Consequently, both $\bA_\varepsilon$ and $\widetilde{\bA}_\varepsilon$ are nonsingular. Their positive polar factors are therefore positive definite, and their orthogonal polar factors are uniquely determined.

Since the matrices are $N\times N$, the spectral and Frobenius norms satisfy $\|\bX\|_F\leq\sqrt{N}\|\bX\|_2$. The Frobenius-norm perturbation bound of \citep{chen1989perturbation} for positive polar factors thus gives
\begin{equation}
    \left\|\widetilde{\bP}-\bP\right\|_F
    \leq
    \sqrt{2} \left\|
        \widetilde{\bA}_\varepsilon-\bA_\varepsilon
    \right\|_F
    \leq
    \sqrt{2N}\,\delta.
    \label{eq:proof_positive_factor}
\end{equation}
For the orthogonal polar factors, Theorem 1 of \citet{li1995new} yields
\begin{align}
    \left\|
        \widetilde{\bU}-\bU
    \right\|_F
    &\leq
    \frac{
        2
        \left\|
            \widetilde{\bA}_\varepsilon-\bA_\varepsilon
        \right\|_F
    }{
        \sigma_{\min}(\bA_\varepsilon)
        +
        \sigma_{\min}(\widetilde{\bA}_\varepsilon)
    }
    \leq
    \frac{4\sqrt{N}}{3\alpha_\varepsilon}\,\delta.
    \label{eq:proof_orthogonal_factor}
\end{align}
\Cref{eq:proof_positive_factor} and \cref{eq:proof_orthogonal_factor} show that both polar factors vary Lipschitz-continuously with the regularized matrix. We next transfer the polar-factor bounds to the four PolarFlow endpoint channels.     

\emph{Step 2 Stability of the retained spectral representation.}
Having controlled the polar factors, we now transfer these bounds to the retained eigenvalues, eigenangles, and spectral frames. Using
$\|\bX\|_2\leq\|\bX\|_F$ and the estimates from Step 1, choose $r>0$ sufficiently small so that every perturbation satisfying $\delta<r$ obeys
\begin{equation}
\label{eq:proof_u_gap_preservation}
\begin{aligned}
\left\|\widetilde{\bP}-\bP\right\|_2
    &\leq
    \left\|\widetilde{\bP}-\bP\right\|_F
    \leq
    \sqrt{2N}\,\delta
    <
    \frac{\gamma_{p,\varepsilon}}{2},\\
    \left\|\widetilde{\bU}-\bU\right\|_2
    &\leq
    \left\|\widetilde{\bU}-\bU\right\|_F
    \leq
    \frac{4\sqrt{N}}{3\alpha_\varepsilon}\,\delta
    <
    \frac{\gamma_{u,\varepsilon}}{2}.
\end{aligned}
\end{equation}
These inequalities ensure that the retained and discarded parts of each spectrum remain separated under the perturbation. Consequently, the retained invariant subspaces of $\bP$ and $\bU$ can be matched consistently with their perturbed counterparts, and their dimensions remain unchanged.

For the positive polar factor, recall that $\bP = (\bA_{\varepsilon}^{*}\bA_{\varepsilon})^{1/2}$. Its eigenvalues are therefore the singular values of $\bA_{\varepsilon}$. Under the common ordering of the retained eigenvalues, Mirsky’s singular-value perturbation theorem
\citep{mirsky1960symmetric} gives
\begin{align*}
    \|\widetilde{\blambda}_{p}^{(k)}-\blambda_{p}^{(k)}\|_{2}\leq \|\widetilde{\bA}_{\varepsilon}-\bA_{\varepsilon}\|_{F} \leq \sqrt{N}\delta
\end{align*}
Thus, the retained eigenvalue channel is Lipschitz-continuous with respect to the matrix perturbation. The same spectral separation also stabilizes the associated retained invariant subspace. Indeed, the separation between the perturbed retained spectrum and the complementary spectrum of $\bP$ is bounded below by 
\begin{align*}
    \gamma_{p,\varepsilon} - \|\widetilde{\bP}-\bP\|_{2} > \frac{\gamma_{p,\varepsilon}}{2}.
\end{align*}
To remove the intrinsic basis ambiguity, we align the perturbed frame
separately within each retained eigenspace. Let
$\nu_1,\ldots,\nu_m$ denote the distinct retained eigenvalues of $\bP$,
with multiplicities $m_1,\ldots,m_m$, and let $\bQ_{p,\ell}$ and
$\widetilde{\bQ}_{p,\ell}$ be orthonormal bases for the corresponding
unperturbed and perturbed spectral subspaces. For each $\ell$, choose
\begin{equation}
    \bR_{p,\ell}^{\star}
    \in
    \argmin_{\bR\in\mathrm{O}(m_\ell)}
    \left\|
        \widetilde{\bQ}_{p,\ell}\bR-\bQ_{p,\ell}
    \right\|_F .
    \label{eq:p_frame_alignment}
\end{equation}
For a simple eigenvalue, this alignment only fixes the eigenvector sign;
for a repeated eigenvalue, it selects the closest orthonormal basis
within the corresponding spectral subspace. The positive spectral gaps
ensure that the perturbed subspaces have the same dimensions as their
unperturbed counterparts. Moreover, for sufficiently small $\delta$,
their principal angles are smaller than $\pi/2$, so the cross-Gram
matrices $\widetilde{\bQ}_{p,\ell}^{*}\bQ_{p,\ell}$ are nonsingular
and the minimizing orthogonal transformations in
\cref{eq:p_frame_alignment} are unique and vary continuously with the
perturbation. We henceforth replace each
$\widetilde{\bQ}_{p,\ell}$ by
$\widetilde{\bQ}_{p,\ell}\bR_{p,\ell}^{\star}$ and retain the notation
$\widetilde{\bQ}_{p}$ for the resulting aligned frame.

With this convention, applying the Davis–Kahan theorem \citep{davis1970rotation} and the standard relation between subspace distance and optimally aligned frame distance gives
\begin{align*}
    \|\widetilde{\bQ}_{p}-\bQ_{p}\|_{F} \leq \frac{2\sqrt{2}}{\gamma_{p,\varepsilon}}\|\widetilde{\bP}-\bP\|_{F} \leq \frac{4\sqrt{N}}{\gamma_{p,\varepsilon}}\delta
\end{align*}
Therefore, both retained channels associated with $\bP$ are locally Lipschitz-stable.

We now consider the orthogonal polar factor. Let $z_{i} = e^{\mathrm{i}\theta_i}$ and $\widetilde{z}_{i} = e^{\mathrm{i}\widetilde{\theta}_i}$ denote the eigenvalues of $\bU$ and $\widetilde{\bU}$, respectively. Since both matrices are normal, the Hoffman–Wielandt theorem \citep{hoffman1953variation} permits the perturbed eigenvalues such that 
\begin{align*}
    \left(\sum_{i=1}^{N}|\widetilde{z}_{i}-z_{i}|^{2}\right)^{1/2} \leq \|\widetilde{\bU}-\bU\|_{F}.
\end{align*}
Moreover, because $\|\widetilde{\bU}-\bU\|_{F} < \gamma_{u,\varepsilon}/2$, the retained and discarded spectra remain separated. Restricting the above inequality to the retained eigenvalues therefore gives 
\begin{align*}
    \left(\sum_{i\in \calI_{u}}|\widetilde{z}_{i}-z_{i}|^{2}\right)^{1/2} \leq \|\widetilde{\bU}-\bU\|_{F}\leq     \frac{4\sqrt{N}}{3\alpha_\varepsilon}\,\delta.  
\end{align*}
Let $\wrap(x)\in [-\pi,\pi]$ denote the principal value of $x$ module $2\pi$. For each $i\in \calI_{u}$, we have
\begin{align*}
  |\widetilde{z}_{i}-z_{i}| = 2 \bigg|\sin\left(\frac{\wrap(\widetilde{\theta}_{i}-\theta_{i})}{2}\right)\bigg|.
\end{align*}
since $2\left|\sin\left(\frac{x}{2}\right)\right| \geq \frac{2|x|}{\pi}$ for $ x\in[-\pi,\pi]$, we have
\begin{align*}
    \left|
        \operatorname{wrap}
        \left(\widetilde{\theta}_i-\theta_i\right)
    \right|
    \leq
    \frac{\pi}{2}
    \left|\widetilde{z}_i-z_i\right|
\end{align*}
for each $i\in\calI_u$. Applying this inequality over the retained indices gives
\begin{align}
    \left\|
        \operatorname{wrap}
        \left(
            \widetilde{\boldsymbol{\theta}}_k
            -
            \boldsymbol{\theta}_k
        \right)
    \right\|_2
    &\leq
    \frac{\pi}{2}
    \left(
        \sum_{i\in\calI_u}
        \left|\widetilde{z}_i-z_i\right|^2
    \right)^{1/2}
   \leq
    \frac{2\pi\sqrt{N}}
         {3\alpha_\varepsilon}\,\delta.
    \label{eq:proof_eigenangle_stability}
\end{align}

The same spectral separation also stabilizes the associated retained
frame. In particular, the perturbed retained spectrum remains separated
from the complementary spectrum of $\bU$ by at least
\begin{equation}
    \gamma_{u,\varepsilon}
    -
    \left\|
        \widetilde{\bU}-\bU
    \right\|_2
    >
    \frac{\gamma_{u,\varepsilon}}{2}.
\end{equation}
Applying the invariant-subspace perturbation bound for normal matrices
and using the consistent frame alignment assumed in the theorem yields
\begin{align}
    \left\|
        \widetilde{\bQ}_{u}
        -
        \bQ_{u}
    \right\|_F
    &\leq
    \frac{2\sqrt{2}}
         {\gamma_{u,\varepsilon}}
    \left\|
        \widetilde{\bU}-\bU
    \right\|_F
    \leq
    \frac{8\sqrt{2N}}
         {3\alpha_\varepsilon\gamma_{u,\varepsilon}}
    \,\delta.
    \label{eq:proof_u_frame_stability}
\end{align}
Finally, we establish the stability of the score-based mode selection. Under the locally consistent frame convention assumed above, each score $s_i=2\left|\sin\!\left(\frac{\theta_i}{2}\right)\right|\bq_{u,i}^{\ast}\bP\bq_{u,i}$ is locally Lipschitz in its spectral arguments. Indeed,
$2|\sin(\theta/2)|$ is $1$-Lipschitz with respect to the wrapped angular
distance, and $\|\mathbf q_i\|_2=1$. Consequently,
\begin{align}
    \left|
        \widetilde{s}_i-s_i
    \right|
    \leq&
    \left\|
        \widetilde{\bP}
    \right\|_2
    \left|
        \wrap
        \left(
            \widetilde{\theta}_i-\theta_i
        \right)
    \right|
    \nonumber\\
    &+
    2\left\|
        \widetilde{\bP}-\bP
    \right\|_2
    +
    4\left\|
        \bP
    \right\|_2
    \left\|
        \widetilde{\mathbf q}_i-\mathbf q_i
    \right\|_2.
    \label{eq:proof_score_perturbation}
\end{align}
The spectral estimates established above therefore imply that, within a sufficiently small neighborhood of $\bA_\varepsilon$, there exists a finite constant $L_s$, independent of $\delta$, such that
\begin{equation}
    \max_i
    \left|
        \widetilde{s}_i-s_i
    \right|
    \leq
    L_s\delta.
\end{equation}
Recall that the positive selection margin is $\mu_\varepsilon= s_{k}-s_{k+1}>0$ where the scores are ordered in nonincreasing order. Choose the neighborhood radius $r$ to additionally satisfy $    r
    < \frac{\mu_\varepsilon}{2L_s}$. Then, whenever $\delta<r$, every retained index $i$ and discarded
index $j$ satisfy
\begin{align*}
     \widetilde{s}_i-\widetilde{s}_j
    \geq
    s_i-s_j-2L_s\delta
    \geq
    \mu_\varepsilon-2L_s\delta
    >
    0.
\end{align*}
Thus, the perturbation does not change the top-$k$ index set.

With the top-$k$ index set preserved and the spectral frames consistently aligned, the preceding channel-wise bounds yield
\begin{equation}
    d_\Phi\left(
        \Phi_k(\bA_\varepsilon),
        \Phi_k(\widetilde{\bA}_\varepsilon)
    \right)
    \leq
    C
    \left\|
        \widetilde{\bA}_\varepsilon-\bA_\varepsilon
    \right\|_2.
    \label{eq:polarflow_endpoint_stability}
\end{equation}
Here, $C<\infty$ depends on $N$, $\alpha_\varepsilon$,
$\gamma_{p,\varepsilon}$, and $\gamma_{u,\varepsilon}$, but not on the perturbation magnitude. This proves the claimed local Lipschitz stability of the regularized PolarFlow endpoint.

\emph{Step 3: Stability of the conditional velocity targets.} Let $C$ denote the endpoint-stability constant established above. Geometric regularity implies that geodesic interpolation on the Stiefel manifolds or on the high-dimensional torus avoids the cut locus of the $\Log$ function or the discontinuity of the wrap function.  Therefore, in a sufficiently small neighborhood of \(\bZ_1\), there exist finite constants $L_Z$ and $L_{\dot Z}$ such that
\begin{align}
    \sup_{t\in[0,1]}
    \left\|
        \bZ_t(\widetilde{\bA}_\varepsilon)
        -
        \bZ_t(\bA_\varepsilon)
    \right\|
    &\leq
    L_Z\,
    d_\Phi\left(
        \Phi_k(\bA_\varepsilon),
        \Phi_k(\widetilde{\bA}_\varepsilon)
    \right),
    \label{eq:conditional_path_stability}\\
    \sup_{t\in[0,1]}
    \left\|
        \dot{\bZ}_t(\widetilde{\bA}_\varepsilon)
        -
        \dot{\bZ}_t(\bA_\varepsilon)
    \right\|
    &\leq
    L_{\dot Z}\,
    d_\Phi\left(
        \Phi_k(\bA_\varepsilon),
        \Phi_k(\widetilde{\bA}_\varepsilon)
    \right).
    \label{eq:conditional_path_velocity_lipschitz}
\end{align}
Choose $r$ sufficiently small that both conditional paths remain in
the neighborhoods on which these Lipschitz constants apply. Combining
\cref{eq:conditional_path_velocity_lipschitz} with the endpoint bound gives
\begin{align}
    \sup_{t\in[0,1]}
    \left\|
        \dot{\bZ}_t(\widetilde{\bA}_\varepsilon)
        -
        \dot{\bZ}_t(\bA_\varepsilon)
    \right\|
    &\leq
    L_{\dot Z} C
    \left\|
        \widetilde{\bA}_\varepsilon-\bA_\varepsilon
    \right\|_2.
    \label{eq:proof_conditional_velocity_stability}
\end{align}
Thus, the conditional velocity targets vary Lipschitz-continuously with the regularized input matrix.

\emph{Step 4: Stability of the learned velocity field.}
We now transfer the stability of the conditional velocity targets to the learned velocity field. For any regularized matrix $\bA'_\varepsilon$ in the neighborhood considered above, recall the velocity approximation error along its conditional path can be expressed as
\begin{equation}
    \boldsymbol e_\phi(\bA'_\varepsilon,t)
    :=
    \widehat{\bv}_\phi
    \left(
        \bZ_t(\bA'_\varepsilon),t
    \right)
    -
    \dot{\bZ}_t(\bA'_\varepsilon).
    \label{eq:velocity_approximation_error}
\end{equation}
Assume that this approximation error varies locally
Lipschitz-continuously with the regularized matrix, uniformly over $t\in[0,1]$. That is, there exists a finite constant $L_{e}$ such that 
\begin{equation}
    \sup_{t\in[0,1]}
    \left\|
        \boldsymbol e_\phi(\widetilde{\bA}_\varepsilon,t)
        -
        \boldsymbol e_\phi(\bA_\varepsilon,t)
    \right\|
    \leq
    L_e
    \left\|
        \widetilde{\bA}_\varepsilon-\bA_\varepsilon
    \right\|_2.
    \label{eq:velocity_error_stability}
\end{equation}
By \cref{eq:velocity_approximation_error}, for every $t\in[0,1]$,
\begin{align}
&\widehat{\bv}_\phi
\left(
    \bZ_t(\widetilde{\bA}_\varepsilon),t
\right)
-
\widehat{\bv}_\phi
\left(
    \bZ_t(\bA_\varepsilon),t
\right)
\nonumber\\
&\quad=
\dot{\bZ}_t(\widetilde{\bA}_\varepsilon)
-
\dot{\bZ}_t(\bA_\varepsilon)
+
\boldsymbol e_\phi(\widetilde{\bA}_\varepsilon,t)
-
\boldsymbol e_\phi(\bA_\varepsilon,t).
\end{align}
Taking norms and the supremum over $t$, and then applying
\cref{eq:proof_conditional_velocity_stability,eq:velocity_error_stability},
gives
\begin{align}
&\sup_{t\in[0,1]}
\left\|
    \widehat{\bv}_\phi
    \left(
        \bZ_t(\widetilde{\bA}_\varepsilon),t
    \right)
    -
    \widehat{\bv}_\phi
    \left(
        \bZ_t(\bA_\varepsilon),t
    \right)
\right\|
\nonumber\\
&\quad\leq
\sup_{t\in[0,1]}
\left\|
    \dot{\bZ}_t(\widetilde{\bA}_\varepsilon)
    -
    \dot{\bZ}_t(\bA_\varepsilon)
\right\|
+
\sup_{t\in[0,1]}
\left\|
    \boldsymbol e_\phi(\widetilde{\bA}_\varepsilon,t)
    -
    \boldsymbol e_\phi(\bA_\varepsilon,t)
\right\|
\nonumber\\
&\quad\leq
\left(
    L_{\dot Z}C_\Phi+L_e
\right)
\left\|
    \widetilde{\bA}_\varepsilon-\bA_\varepsilon
\right\|_2.
\label{eq:proof_learned_velocity_stability}
\end{align}
Therefore, taking $C:=L_{\dot Z}C_\Phi+L_e$
establishes \cref{eq:learned_velocity_stability} and completes the proof.
\end{proof}

\section{Dataset descriptions}
\label[Appendix]{app.datasets}
\subsection{Stochastic Block Model~\citep{holland1983sbm}}

We adopt the SBM sampling protocol of \citet{carballocastro2026directo} without modification, so that our numbers remain directly comparable to the results reported there. Every synthetic graph in our experiments is represented as a binary adjacency matrix $\bm{A}\in\{0,1\}^{n\times n}$, with $\bm{A}_{ij}=1$ indicating a directed edge from node $i$ to node $j$.  Sampled matrices are then converted into whichever input format the corresponding model expects. 

A graph is drawn by first partitioning its nodes into $K$ blocks of sizes $\{n_1,\dots,n_K\}$, where $K$ is sampled uniformly from $\{2,\dots,5\}$ and each block size is sampled uniformly from $\{20,\dots,40\}$.  Writing $z_i\in\{1,\dots,K\}$ for the block containing node $i$, each ordered pair is then assigned an edge independently,
\[
\bm{A}_{ij}\sim\mathrm{Bernoulli}\!\left(P_{z_i z_j}\right),
\]
with $P\in[0,1]^{K\times K}$ collecting the within- and between-block edge probabilities.  Because the two orientations of a pair are drawn separately, the sampled graph is directed.  We use the same setting as the reference protocol,
\[
P_{kk}=p_{\mathrm{intra}}=0.3,
\qquad
P_{kl}=p_{\mathrm{inter}}=0.05 \quad (k\neq l),
\]
which yields graphs whose edges concentrate inside blocks while leaving a sparse set of cross-block connections. We sample $200$ graphs and split them in $128/32/40$ training, validation, and test sets. Further dataset statistics can be found in Table~\ref{tab:syntheticstats}.

\begin{table}[h]
    \centering
    \caption{SBM dataset statistics.}
    \vspace{0.15cm}
    \label{tab:syntheticstats}
     \resizebox{1.0\linewidth}{!}{
    \begin{tabular}{lccccccccc}
        \midrule
        Dataset & Min. nodes & Max. nodes & Avg. nodes & Min. edges & Max. edges & Avg. edges & \#Train & \#Val & \#Test \\
        \midrule
        SBM  & 44 & 175 & 106 & 340 & 3008 & 1426 & 128 & 32 & 40 \\
        \bottomrule
    \end{tabular}}
\end{table}

\subsection{Twitter}\label[Appendix]{app.twitter_datasets}
To construct the Twitter dataset, we first remove self-loops and collapse duplicate edges in the source network. We then partition the source nodes into disjoint training, validation, and test pools before subgraph extraction, ensuring that no account appears in more than one split. Within each pool, candidate induced subgraphs are obtained through breadth-first expansion over both incoming and outgoing neighborhoods, after which only the largest weakly connected component is retained. We keep candidates containing between 64 and 80 nodes, with at least as many edges as nodes and edge reciprocity no greater than $0.42$. To limit redundancy within each split, we further discard a candidate if the Jaccard similarity between its node set and that of any previously retained graph exceeds $0.5$. Statistics of the resulting dataset are reported in \cref{tab:twitter-dataset-statistics}. 

This procedure yields 4{,}067 connected graphs with an overall mean directed density of $0.118$, where density is computed as $|\calE|/(|\calV|(|\calV|-1))$. Split-wise and aggregate statistics are reported in \cref{tab:twitter-dataset-statistics}. Twitter graphs exhibit heterogeneous connectivity on both sides of a directed edge: the mean in- and out-degree Gini coefficients are approximately $0.52$ and $0.48$, reflecting variations in both account popularity and following activity. Their mean reciprocity is $0.357$, compared with $0.166$ under an in/out-degree-preserving null model, indicating that mutual following remains selective but occurs more frequently than expected from the degree sequences alone.

Every retained graph contains at least one directed cycle, and approximately $74\%$ of nodes participate in directed cycles. The graphs also exhibit substantial local clustering and community organization, with mean directed clustering and Louvain modularity of $0.397$ and $0.301$, respectively. Together, these statistics characterize a naturally evolving social-network regime in which asymmetric connectivity, selective reciprocity, directed cycles, degree heterogeneity, and community structure coexist. Since the processed graph representation contains neither node features nor categorical node labels, Twitter is used exclusively as a structure-only benchmark, with generation and evaluation performed on binary directed adjacency matrices.

\begin{table}[h]
\centering
\caption{Twitter dataset statistics. It contains much denser graphs compared with existing datasets TPU Tiles and Visual Genome.}
\label{tab:twitter-dataset-statistics}
\small
\resizebox{\linewidth}{!}{
\begin{tabular}{lrrrrrrr}
\midrule
Dataset
& Min.\ nodes
& Max.\ nodes
& Avg.\ nodes
& Min.\ edges
& Max.\ edges
& Avg.\ edges
& \# Graphs \\
\midrule
Overall & 64 & 80 & 71.4 & 108 & 2{,}152 & 591.3 & 4{,}067 \\
\midrule
Train      & 64 & 80 & 71.4 & 108 & 1{,}959 & 545.8 & 3{,}447 \\
Validation & 64 & 80 & 70.9 & 157 & 1{,}776 & 682.5 & 198 \\
Test       & 64 & 80 & 71.6 & 125 & 2{,}152 & 920.5 & 422 \\
\bottomrule
\end{tabular}
}
\end{table}

\subsection{Circuit}\label[Appendix]{app.Circuit_datasets}

Each node represents a primitive cell instance, and its categorical label specifies one of seven cell types appearing in the retained graphs: lookup table (LUT), flip-flop (FD), multiplexer (MUX), constant driver (CONST), carry-chain cell (CARRY), memory cell (MEMORY), or input/output buffer (IOBUF). A directed edge $u\rightarrow v$ indicates that a signal produced by cell $u$ drives an input of cell $v$; hence, edge orientation follows signal flow from driver to load. Parallel pin-level connections between the same pair of cells are represented by a single binary edge, and self-loops are removed. All graphs in the Circuits dataset are weakly connected and sparse, with a mean directed density of $0.044$. The node-label distribution is highly imbalanced, with LUT and FD cells together accounting for $91.6\%$ of all nodes. Split-wise and aggregate dataset statistics are reported in \cref{tab:circuit-dataset-statistics}.

Circuit graphs represent engineered systems whose connectivity is constrained by hardware functionality. In particular, a cell's fan-in is limited by its number of input pins, whereas its output may drive many downstream cells, resulting in substantially greater out-degree than in-degree heterogeneity. Directed dependencies are rarely mutual, with an average reciprocity of only $0.012$. Approximately $42.7\%$ of the graphs are acyclic; in every remaining graph, each directed cycle contains at least one state-holding cell, reflecting the absence of purely combinational feedback loops in synchronous circuit designs. Connectivity is also strongly conditioned on cell type: only 20 of the 49 possible ordered pairs of the seven observed categories occur, and five pairs (i.e., FD$\rightarrow$LUT, LUT$\rightarrow$LUT, LUT$\rightarrow$FD, CARRY$\rightarrow$LUT, and LUT$\rightarrow$CARRY), accounting for approximately $86.4\%$ of all edges. Circuits therefore provides a benchmark for modeling constrained and task-specific directed connectivity, where successful generation requires capturing asymmetric signal dependencies and label-conditioned wiring rules rather than only matching aggregate graph statistics.

\begin{table}[!h]
\centering
\caption{Circuits dataset statistics.}
\label{tab:circuit-dataset-statistics}
\small
\resizebox{\linewidth}{!}{
\begin{tabular}{lrrrrrrr}
\midrule
Dataset
& Min.\ nodes
& Max.\ nodes
& Avg.\ nodes
& Min.\ edges
& Max.\ edges
& Avg.\ edges
& \# Graphs \\
\midrule
Overall & 30 & 60 & 44.1 & 40 & 193 & 80.5 & 318 \\
\midrule
Train      & 30 & 60 & 43.8 & 40 & 193 & 81.7 & 251 \\
Validation & 30 & 59 & 44.7 & 44 & 139 & 73.4 & 41  \\
Test       & 30 & 60 & 45.3 & 44 & 165 & 80.6 & 26  \\
\bottomrule
\end{tabular}
}
\end{table}

\subsection{TPU Tiles~\citep{google2023tpugraphs}}

Each graph in this collection is a directed acyclic computation graph extracted from a machine learning workload, where nodes are tensor operations and edges carry the data dependencies between them; a single graph corresponds to one unit of computation, such as a training epoch or an inference pass. A datapoint pairs such a graph with a compiler configuration and records the measured execution time obtained under that configuration, giving $6{,}301$ datapoints in total.  The underlying workloads span a range of architecture families, including ResNets, EfficientNets, Masked R-CNNs, and Transformers, so the collection exhibits considerable structural heterogeneity: graph size ranges from a pair of operations to roughly $400$ nodes, while the average graph has only $41$ nodes and $43$ edges.  This combination of directedness, acyclicity, and heavy-tailed size makes it a useful test of whether a generator can capture engineered functional dependencies rather than the homogeneous structure of synthetic families.

We adopt the preprocessing pipeline and predefined training/validation/test split of \citet{li2025layerdag}. The original collection has a highly dispersed size distribution, with graph sizes ranging from $2$ to $394$ nodes. Although $90\%$ of the graphs contain at most $80$ nodes, the sparsely populated upper tail extends considerably further, with $1\%$ containing more than $113$ nodes. At the other extreme, $26\%$ of the graphs contain fewer than $20$ nodes and $8\%$ contain fewer than $10$. These small graphs provide limited structural information and can yield less stable spectral, wavelet, and clustering statistics. We therefore retain graphs containing between $30$ and $80$ nodes. The filter is applied independently within each split, preserving the original split assignments. This restriction removes both size extremes and confines graph sizes to a range spanning less than a factor of three, allowing the evaluation to focus on structural fidelity within a representative size regime rather than on behavior at extreme graph sizes. The resulting subset contains $3{,}183$ graphs, comprising $2{,}539$ training, $318$ validation, and $326$ test graphs, with averages of $50.1$ nodes and $53.0$ edges per graph. All methods are trained and evaluated on the same filtered subset. Per-split statistics are given in Table~\ref{tab:tpustats}.

\begin{table}[ht]
    \centering
    \caption{TPU Tiles dataset statistics.}
    \vspace{0.15cm}
    \label{tab:tpustats}
     \resizebox{1.0\linewidth}{!}{
    \begin{tabular}{lccccccc}
        \midrule
        Dataset & Min. nodes & Max. nodes & Avg. nodes & Min. edges & Max. edges & Avg. edges & \# Graphs \\
        \midrule
        Overall     & 30 & 80 & 50  & 29 & 116 &53  & 3183 \\
        \midrule
        Train       & 30 & 80 & 50 & 29 & 116 & 53.0 & 2539 \\ 
        Validation  & 30 & 80 & 52 & 29 & 102 & 55.2 & 318 \\ 
        Test        & 30 & 80 & 48 & 29 & 102 & 50.9 & 326 \\
        \bottomrule
    \end{tabular}}
\end{table}

\subsection{Visual Genome~\citep{krishna2017visualgenome}}
\label{app:visualgenome}

Visual Genome annotates over $100{,}000$ images with objects, their attributes, and the relationships between them, and is a standard benchmark for scene graph generation, visual question answering, and grounded language understanding.  Each annotated scene induces a directed graph in which the three annotation types become typed nodes, labelled by their text, and the visual semantics fix the orientation of every edge: objects point to their attributes and to the relationships they participate in, relationships point back to the objects they take as arguments, and attributes are sinks.  The resulting graphs are therefore directed and partly cyclic, with $119$ of them ($\sim$37.5\%) containing at least one cycle, which makes the dataset a natural test case for methods that cannot assume a topological ordering.

We take the graph construction, vocabulary filtering, and splits directly from \citet{carballocastro2026directo}, so that our results are comparable to those reported there.  In brief, that pipeline retains scenes with $20$--$40$ nodes and at least $25$ edges, and restricts node labels to the $20$ most frequent objects, relationships, and attributes, giving $60$ node classes in total.  Split statistics are reported in Table~\ref{tab:visgenstats}; the full label vocabulary is given in the original work.

\begin{table}[h]
    \centering
    \caption{Visual Genome dataset statistics.}
    \vspace{0.15cm}
    \label{tab:visgenstats}
    \resizebox{1.0\linewidth}{!}{
    \begin{tabular}{lccccccccc}
        \midrule
        Dataset & Min. nodes & Max. nodes & Avg. nodes & Min. edges & Max. edges & Avg. edges & \# Graphs \\
        \midrule
        Overall & 21 & 40 & 34 & 25 & 47 & 28 & 317 \\
        \midrule
        Train & 21 & 40 & 33 & 25 & 47 & 28 & 203  \\
        Validation  & 24 & 40 & 34 & 25 & 40 & 28 &  51 \\
        Test & 21 & 40 & 34 & 25 & 36 & 28 &  63 \\
        \bottomrule
    \end{tabular}}
\end{table}

\section{Further details on evaluation metrics}\label[Appendix]{app.eval.metric}

\subsection{Validity, uniqueness, and novelty.} Validity is evaluated using dataset-specific criteria. For synthetic datasets, we test whether generated graphs are consistent with the underlying generative distribution; for DAG datasets, validity requires acyclicity. Accordingly, on TPU Tiles, we report the fraction of acyclic samples, whereas on Visual Genome, we verify typed structural constraints requiring edges to point from objects to attributes or relationship nodes, and from relationship nodes to objects. We do not define validity for Circuits or Twitter because neither representation admits a complete binary criterion: circuit functionality cannot be established from extracted cell-level subgraphs, while social-network plausibility is inherently distributional. Their fidelity is instead evaluated through distributional graph statistics. For all datasets, uniqueness denotes the fraction of generated graphs that are distinct up to isomorphism, while novelty denotes the fraction not present in the training set. The V.U.N. ratio measures the proportion of samples that are simultaneously valid, unique, and novel and is reported only for datasets with a defined validity criterion.

\subsubsection{Validity metrics}

\paragraph{Stochastic block model} Following \citet{carballocastro2026directo}, we assess whether a generated graph is consistent with the target directed SBM through its blockwise connection probabilities. Given a graph $G=(\mathcal V,\mathcal E)$ with adjacency matrix $\mathbf A$, we first infer a partition into $B$ non-empty blocks using the minimum-description-length method of \citet{peixoto2014efficisbm}:  
\begin{align*}
\widehat z = \arg\min_{z:\mathcal V\rightarrow{1,\ldots,B}} \calL(G,z),  
\end{align*}
where $\mathcal L(G,z)$ denotes the description length. The inferred partition can be further refined through $T$ MCMC steps. Let $N_i$ be the number of nodes assigned to block $i$ and $E_{ij}$ the number of directed edges from block $i$ to block $j$. The empirical within- and between-block connection probabilities are
\begin{equation}
\widehat p_{\mathrm{intra},i}
=
\frac{E_{ii}}{N_i(N_i-1)+\varepsilon},
\qquad
\widehat p_{\mathrm{inter},ij}
=
\frac{E_{ij}}{N_iN_j+\varepsilon},
\quad i\neq j,
\end{equation}
where $\varepsilon=10^{-6}$ ensures numerical stability. We compare these estimates with the target probabilities $p_{\mathrm{intra}}$ and $p_{\mathrm{inter}}$ using 
\begin{equation}
W_{ii}
=
\frac{
(\widehat p_{\mathrm{intra},i}-p_{\mathrm{intra}})^2
}{
\widehat p_{\mathrm{intra},i}
(1-\widehat p_{\mathrm{intra},i})+\varepsilon
},
\qquad
W_{ij}
=
\frac{
(\widehat p_{\mathrm{inter},ij}-p_{\mathrm{inter}})^2
}{
\widehat p_{\mathrm{inter},ij}
(1-\widehat p_{\mathrm{inter},ij})+\varepsilon
},
\quad i\neq j.
\end{equation}
Each statistic is converted into a block-pair compatibility value using a chi-squared reference distribution with one degree of freedom,
\begin{equation}
q_{ij}
=
1-F_{\chi^2_1}(W_{ij}),
\qquad
s_{\mathrm{SBM}}(G)
=
\frac{1}{B^2}
\sum_{i=1}^{B}\sum_{j=1}^{B}q_{ij}.
\end{equation}
The resulting graph-level score averages agreement across all ordered block pairs, with larger values indicating closer conformity to the prescribed SBM parameters.

\paragraph{TPU Tiles}
For the TPU Tiles dataset, validity is determined by acyclicity, consistent with its representation of computational pipelines as directed acyclic graphs (DAGs). We report the proportion of generated graphs containing no directed cycles.

\paragraph{Visual Genome} 
Following \citet{carballocastro2026directo}, we evaluate the typed structural constraints of Visual Genome scene graphs, which contain object, relationship, and attribute nodes. Only three directed edge types are permitted: object-to-relationship, relationship-to-object, and object-to-attribute. We adapt the original binary graph-level criterion by evaluating these constraints at the edge level. Under the binary formulation, a single violating edge makes the entire graph invalid; this all-or-nothing criterion becomes increasingly stringent as the number of edges grows and provides little separation among models on graphs of this size. We therefore define
\begin{equation}
\operatorname{Validity}(G)
=
\frac{\text{number of type-valid directed edges in }G}
{|\calE|}.
\end{equation}
where $|\calE|$ denotes the total number of directed edges in $G$. For each sampling run, we average this score over the generated graphs and report its mean and standard deviation across five runs. Higher values indicate greater compliance with the scene-graph typing rules, and every reference graph has an edge-validity score of one by construction.

\subsection{Structural distributional alignment}
To assess how closely generated graphs match the data distribution, we compute the Maximum Mean Discrepancy (MMD) \citep{martinkus2022spectre,thompson2022evaluation} between generated and test graphs over in- and out-degree distributions, directed clustering coefficients, spectral, and wavelet features. Spectral features are computed using Chung's directed Laplacian \citep{chung2005laplacians}, which is computationally simpler than magnetic-Laplacian alternatives. For each descriptor, we normalize the generated--test MMD by the corresponding train--test MMD and average the resulting ratios across descriptors. Lower values indicate better alignment. Descriptor-level results are provided below.

\paragraph{Maximum mean discrepancy metrics}\label[Appendix]{app.mmd_metrics}
Following \citet{martinkus2022spectre} and its extension to directed graphs by \citet{carballocastro2026directo}, we evaluate structural distributional alignment using Maximum Mean Discrepancy (MMD) over five descriptors: out-degree, in-degree, directed clustering, spectrum, and wavelet features. In- and out-degree distributions are evaluated separately to capture asymmetric connectivity. For the clustering descriptor, we compute the distribution of local directed clustering coefficients using the \texttt{clustering()} function in \texttt{NetworkX}, which supports directed graphs. For the spectral and wavelet descriptors, we replace the conventional undirected Laplacian with Chung's directed Laplacian \citep{chung2005laplacians}, which derives spectral information from directed random-walk dynamics while remaining computationally simpler than magnetic-Laplacian alternatives. 

For each descriptor, we compute the MMD between the generated and test sets and normalize it by the corresponding MMD between the training and test sets. The reported average ratio is the mean of the five normalized scores. Lower values indicate closer structural alignment, while a ratio of one corresponds to the train--test baseline. We omit the orbit-count metric used in undirected benchmarks because its standard ORCA-based implementation does not support directed graphlets, and graph symmetrization would discard edge-direction information.

\subsection{Joint node-edge distributional metrics}\label[Appendix]{app.joint_nodoedge_metrics}

Structural statistics alone cannot determine whether generated graphs preserve the dependence between categorical node labels and directed connectivity. For datasets with categorical node labels, we extend the evaluation beyond structural statistics to assess both node-label marginals and their dependence on directed connectivity. First, node-label MMD compares graph-level label histograms to assess alignment of marginal categorical proportions. Second, we extract dataset-specific semantic tuples, such as $(\text{object},\text{relation},\text{object})$, and report precision and recall to measure the accuracy and coverage of typed connectivity patterns relative to the test set. We additionally compute Triplet MMD between tuple-frequency histograms to quantify their distributional agreement. Finally, following \citet{thompson2022evaluation}, we compute RBF-kernel MMD over GNN-derived graph embeddings that jointly encode node labels and structure, thereby measuring global attributed-graph similarity. These metrics are reported only for datasets with categorical node labels.

\paragraph{Node-label distribution MMD.}
For each graph, we construct a normalized histogram over node categories and compute the MMD between the histograms of generated and test graphs. This metric measures whether the generated graphs reproduce the distribution of node-type proportions in the reference data.

\paragraph{Multiplicity-aware triplet-based precision and recall.}
We build on the Triplet Precision and Recall metrics of \citet{carballocastro2026directo}, extending them to account for semantic-tuple multiplicity. For Visual Genome, we extract an $(\text{object},\text{attribute})$ pair for each attribute node and a $(\text{subject},\text{relationship},\text{object})$ triplet for each relationship node. For TPU Tiles and Circuits, each directed edge $u\rightarrow v$ contributes $(\ell(u),\texttt{connects},\ell(v))$. For each tuple type, we compute its average occurrence rate per graph in the generated and test sets. Its matched contribution is the smaller of these two rates, and summing these contributions over all tuple types gives the total number of matched tuple occurrences. Precision and recall are then defined as
\begin{align*}
\operatorname{Precision}
= \frac{\text{matched tuple occurrences}}
{\text{generated tuple occurrences}},
\qquad
\operatorname{Recall}
= \frac{\text{matched tuple occurrences}}
{\text{test tuple occurrences}}.  
\end{align*}

Higher precision indicates that a larger proportion of generated tuple occurrences is matched by the test graphs, whereas higher recall indicates that a larger proportion of test tuple occurrences is recovered by the generated graphs. Both metrics range from zero to one, with higher values indicating better agreement, and both equal one only when every tuple type has the same average per-graph occurrence rate in the generated and test graphs.

\paragraph{Triplet MMD.}
Using the dataset-specific semantic tuples defined above, we construct tuple-frequency representations for the generated and test graphs and compute the MMD between the resulting distributions. This metric evaluates whether label-conditioned connectivity patterns occur with similar relative frequencies in the generated and reference graphs, complementing triplet-based precision and recall, which primarily measure tuple overlap.

\paragraph{Embedding-based RBF MMD.}
Following \citet{thompson2022evaluation}, we obtain graph-level representations using a fixed GNN encoder that incorporates both node labels and directed structure. We then compute MMD with an RBF kernel between the embeddings of generated and test graphs, providing a global measure of joint node--edge distributional alignment.

\section{Experimental details}

\subsection{Details on baselines}
\label{app:expbaselines}

\paragraph{Maximum likelihood estimation (MLE).}
Following \citet{carballocastro2026directo}, we include a simple count-based baseline that fits three empirical distributions to the training split and samples from them independently.  The first is the distribution over graph size, the second is the marginal over node classes, and the third is the edge-type distribution conditioned on the classes of the two endpoints. A sample is then drawn by first choosing $n$, then drawing each node class independently, and finally drawing an edge type for every ordered pair conditioned on its endpoint classes.  The baseline therefore reproduces first-order statistics by construction while ignoring any higher-order structure.

\paragraph{D-VAE~\citep{zhang2019dvae}.}
D-VAE encodes and decodes DAGs with an asynchronous message-passing scheme: rather than updating all nodes simultaneously as a standard GNN would, it processes them one at a time in topological order, so that the representation reflects the computation the graph performs and not only its connectivity. Generation mirrors this order, adding nodes sequentially with all new edges oriented toward the node just introduced, which guarantees acyclicity; inputs must be topologically sorted beforehand.  We follow the setup of \citet{carballocastro2026directo}, using the authors' \href{https://github.com/muhanzhang/D-VAE}{released implementation} with the hyperparameters recommended in the original paper and converting generated graphs back to our evaluation format.  On TPU Tiles, however, the size of the dataset and of its largest graphs made training impractical within our compute budget.

\paragraph{LayerDAG~\citep{li2025layerdag}.}
LayerDAG recasts a DAG as a sequence of bipartite graphs, giving a layer-wise tokenization over which generation proceeds autoregressively; within each step, a diffusion process models the dependencies among nodes that are mutually incomparable, that is, not joined by any directed path.  As with D-VAE, inputs must be topologically sorted.  We use the authors' \href{https://github.com/Graph-COM/LayerDAG}{released implementation} with its default configurations, including the per-component epoch budgets. 

\paragraph{DiGress~\citep{vignac2023digress}.}
We benchmark against the directed variant of \textsc{DiGress} implemented by \citet{carballocastro2026directo}, obtained by removing the symmetrization operations so that the two orientations of a node pair are modeled separately.  In the terminology of that work, this configuration is \textsc{Directo}-DD without dual attention and with Laplacian positional encodings.

\paragraph{DeFoG~\citep{qin2025defog}.}
Likewise, we use the directed variant of \textsc{DeFoG} from \citet{carballocastro2026directo}, again obtained by dropping the symmetrization operations.  This configuration corresponds to \textsc{Directo} without dual attention and with RRWP positional encoding.

\paragraph{Directo~\citep{carballocastro2026directo}.}
\textsc{Directo} is the closest baseline to our setting: it applies discrete flow matching directly to general directed graphs, combining dual attention over incoming and outgoing neighbourhoods with asymmetric positional encodings.  We benchmark against both reported variants, \textsc{Directo}-DD and \textsc{Directo}, and for each we run the two positional encodings used in the original paper, RRWP and the magnetic Laplacian (MagLap).  All four configurations use the authors' released implementation and their published hyperparameters.

\subsection{Implementation details}\label[Appendix]{app:implementation}

\paragraph{Network architectures.} 
The joint velocity field \( v_\phi \) is parameterized by a residual MLP with a backbone shared across its four velocity components. The shared backbone consists of an input projection followed by four pre-normalized, two-layer residual blocks with ReLU activations and a hidden dimension of 1{,}024. For datasets with node labels, the structure-conditioned predictor $f_\psi$ is implemented as a direction-aware GNN with four residual message-passing layers and a hidden dimension of 256. These architectural settings are used across all datasets.

\paragraph{Choice of spectral dimension.} For each dataset, we select \(k\) by minimizing the validation-set average structural MMD ratio over even-valued candidates not exceeding the dataset's maximum node count \(N_{\max}\). Here \(k\) specifies the maximum number of retained spectral components. For an \(n\)-node graph with \(n<k\), we retain the largest even number of components not exceeding \(n\) and zero-pad the resulting channels for batching. On Circuits, we evaluate \(k\in\{4,8,16,32\}\); \(k=32\) achieves the lowest average structural MMD ratio. Results for the other datasets are reported in \cref{tab:spectral_rank_selection}. Across the evaluated settings on Circuits, Visual Genome, and SBM, increasing \(k\) lowers the average structural MMD ratio, suggesting that retaining more spectral information improves structural distribution matching.


\begin{table}[h]
\centering
\caption{
Dataset-specific spectral-rank selection using the validation average structural MMD ratio ($\downarrow$). Selected settings are bolded.
}
\label{tab:spectral_rank_selection}

\small
\setlength{\tabcolsep}{4pt}
\renewcommand{\arraystretch}{0.92}
\setlength{\aboverulesep}{0.2ex}
\setlength{\belowrulesep}{0.2ex}

\begin{tabular}{@{}lcc@{}}
\midrule
\textbf{Dataset}
& \textbf{Rank} $k$
& \textbf{Avg. MMD ratio} $\downarrow$ \\
\midrule

\multirow{4}{*}{Circuit}
& $4$  & $3.748 \pm 0.098$ \\
& $8$  & $2.609 \pm 0.152$ \\
& $16$ & $2.453 \pm 0.210$ \\
& $\mathbf{32}$ & $\mathbf{1.766 \pm 0.130}$ \\
\midrule
\multirow{4}{*}{Visual Genome}
& $4$  & $38.979 \pm 0.286$ \\
& $8$  & $31.018 \pm 1.288$ \\
& $16$ & $10.723 \pm 1.230$ \\
& $\mathbf{32}$ & $\mathbf{3.206 \pm 0.378}$ \\
\midrule
\multirow{5}{*}{SBM}
& $16$  & $15.771 \pm 0.604$ \\
& $32$  & $13.680 \pm 0.275$ \\
& $64$  & $5.627 \pm 0.848$ \\
& $\mathbf{96}$ & $\mathbf{2.694 \pm 0.378}$ \\
\bottomrule
\end{tabular}
\end{table}

\paragraph{Optimization and generation.}
We train the structural model \(v_\phi\) and, where applicable, the label predictor \(f_\psi\) separately using AdamW with a weight decay of \(10^{-4}\). Following the training protocol of \citet{carballocastro2026directo}, we train \(v_\phi\) for
\(10{,}000\) epochs on each dataset and retain the checkpoint with the lowest average MMD ratio on the validation set. For \(f_\psi\), we use a fixed \(2{,}000\) training epochs, which was sufficient for the training objective to stabilize, and use the final checkpoint for evaluation. During its training, the conditioning adjacency is perturbed through edge-count-preserving random rewiring to reduce sensitivity to discrepancies between ground-truth structures used for training and generated structures encountered at inference. For endpoint construction, we use a dataset-specific source template: a directed path (Chain), a directed cycle (Cycle), or an augmented directed cycle (Aug.\ Cycle) with additional forward and selected reverse edges. In every case, the four source channels are obtained by decomposing the same perturbed source matrix and therefore jointly correspond to a single matrix at $t=0$. The ablation in \cref{tab:source_construction_ablation} indicates that this shared-matrix coupling is the primary contributor, while the particular template topology has a smaller additional effect. Dataset-specific hyperparameters choosen are summarized in \cref{tab:implementation_hyperparameters}.


\begin{table*}[!h]
\centering
\caption{Dataset-specific hyperparameters for PolarFlow.
Here, \(k\) denotes the spectral truncation dimension, \(\varepsilon\) the diagonal regularization strength applied to the target adjacency, \(\sigma\) the scale of the Gaussian noise added to the source template, and \(N_{\mathrm{Euler}}\) the number of forward Euler steps used to integrate the generative ODE during sampling. Dashes indicate datasets without node labels.}
\label{tab:implementation_hyperparameters}

\scriptsize
\setlength{\tabcolsep}{4pt}
\renewcommand{\arraystretch}{1.05}

\begin{tabular}{@{}lcccccccccc@{}}
\midrule
& \multicolumn{4}{c}{Endpoint construction}
& \multicolumn{2}{c}{Structural model ($v_{\phi}$)}
& \multicolumn{1}{c}{Sampling}
& \multicolumn{2}{c}{Label model ($f_{\psi}$)} \\
\cmidrule(lr){2-5}
\cmidrule(lr){6-7}
\cmidrule(lr){8-8}
\cmidrule(lr){9-10}

Dataset
& $k$
& $\varepsilon$
& $\sigma$
& source template
& learning rate
& batch size
& \(N_{\mathrm{Euler}}\)
& learning rate
& batch size \\
\midrule

SBM
& 96
& 0
& 0.1
& Aug. cycle
& 3e-4
& 16
& 200
& --
& --
\\

Twitter
& 64
& 0.05
& 0.3
& Aug. cycle
& 3e-4
& 32
& 200
& --
& --
\\

Circuits
& 32
& 0.1
& 0.1
& Chain
& 3e-4
& 32
& 200
& 1e-3
& 64
&  \\

Visual Genome
& 32
& 0.1
& 0.1
& Cycle
& 3e-4
& 32
& 200
& 1e-3
& 64
&  \\

TPU Tiles
& 32
& 0.3
& 0.1
& Chain
& 3e-4
& 32
& 200
& 1e-3
& 64
& \\

\bottomrule
\end{tabular}
\end{table*}

\subsection{Ablation studies}\label[Appendix]{app.ablation}

We conduct controlled ablations of two design choices affecting PolarFlow's stability and coherence: (i) applying a small diagonal perturbation to the data endpoint before polar--spectral decomposition to mitigate rank and spectral degeneracies, and (ii) jointly constructing the four source channels from a shared noisy matrix to preserve coupling among the polar factors and their spectral coordinates.
\paragraph{Effect of diagonal endpoint regularization.}\label[Appendix]{sec:regularization_ablation}
Motivated by the stability analysis \cref{thm:stability}, we instantiate the endpoint perturbation as the diagonal shift  $\bA_\varepsilon= \bA + \varepsilon \bI_N$ and evaluate $\varepsilon\in\{0,0.1,0.2,0.3\}$, where $\varepsilon=0$ corresponds to the unregularized endpoint. The shift is applied before the polar--spectral decomposition and does not introduce self-loops, since diagonal entries are excluded during discretization. Rank deficiency and spectral degeneracies, including repeated eigenvalues and weakly separated spectral modes, can make the polar--spectral endpoint representation non-unique or sensitive to small input perturbations. The diagonal shift may alleviate these issues by increasing the rank margin $\sigma_{\min}(\bA_\varepsilon)$ or widening the relevant spectral gaps of $\bP$ and $\bU$, yielding better-conditioned endpoints and, consequently, more stable conditional paths and velocity targets. As shown in \Cref{tab:diagonal_shift_ablation}, every tested nonzero shift achieves a lower mean structural average ratio than the unregularized setting on TPU Tiles and Circuit, while $\varepsilon=0.1$ yields the best result on Twitter. Overall, these results support the practical value of stability-motivated endpoint regularization, indicating that an appropriately chosen regularization strength can improve generation quality.

\begin{table}[!h]
    \centering
    \caption{\small Effect of the diagonal endpoint shift on structural generation quality via average ratio ($\downarrow$; mean $\pm$ standard deviation). $\varepsilon=0$ denotes the unregularized setting, and the best results are bolded.
    }
    \label{tab:diagonal_shift_ablation}
    \setlength{\tabcolsep}{4pt}
    \renewcommand{\arraystretch}{0.9}
    \small
    \begin{tabular}{lcccc}
        \midrule
        Dataset
        & $\varepsilon=0$
        & $\varepsilon=0.1$
        & $\varepsilon=0.2$
        & $\varepsilon=0.3$ \\
        \midrule
        TPU Tiles
        & $16.44\pm0.36$
        & $\mathbf{11.67\pm0.87}$
        & $14.68\pm0.81$
        & $11.81\pm0.38$ \\

        Circuit
        & $3.73\pm0.11$
        & $1.89\pm0.22$
        & $1.88\pm0.30$
        & $\mathbf{1.81\pm0.20}$ \\

        Twitter
        & $1.61\pm0.02$
        & $\mathbf{1.33\pm0.04}$
        & $2.26\pm0.19$
        & $1.64\pm0.10$ \\
        \bottomrule
    \end{tabular}
\end{table}

\paragraph{Effect of coupled source construction.}
To isolate the contribution of the coupled source distribution, we replace the default source state with an independent prior. PolarFlow constructs all four source channels jointly by applying the polar--spectral decomposition to a single source matrix, thereby preserving their cross-channel coupling. In the independent variant, no matrix is decomposed at $t=0$; instead, the four channels are sampled separately as
\begin{align*}
\bQ_{u,0},\bQ_{p,0}
\overset{\mathrm{ind.}}{\sim}
\operatorname{Unif}\left(\St(N,k)\right),
\quad
\btheta_0^{(k)}\sim\operatorname{Unif}\left((-\pi,\pi]^k\right),
\quad
\blambda_{0}^{(k)} \sim \rho_{\lambda}~\text{with}~\operatorname{supp}(\rho_{\lambda})\subseteq\Real_{+}^k,
\end{align*}
where $\bQ_{u,0}$, $\bQ_{p,0}$, $\btheta_0^{(k)}$, and $\blambda_{0}^{(k)}$ are mutually independent. Here, $\rho_{\lambda}$ denotes the positive prior induced by applying the elementwise transformation $x\mapsto\log(1+\exp(x))$ to a standard Gaussian vector $\bx\sim\mathcal{N}(\mathbf{0},\bI_k)$. This construction respects the geometry of each channel while removing the coupling induced by their derivation from a common source matrix.

As shown in \cref{tab:source_construction_ablation}, the independent prior yields higher average ratios on both Circuit and Visual Genome, indicating degraded structural generation quality. This result highlights the benefit of jointly constructing the four source channels through the polar--spectral decomposition of a common source matrix, thereby preserving their cross-channel consistency. To examine whether the improvement instead arises from the structural prior encoded by the template, we compare the structured-template coupled source with a Gaussian-only coupled variant. For the latter, we set the deterministic template to $\bA_0=\mathbf{0}_{N\times N}$, such that the source matrix reduces to $\bM_0=\sigma\bF$. This removes the template-specific structural information while preserving the joint construction of all four channels from the same source matrix. The Gaussian-only variant yields results close to those of the structured-template source and remains substantially better than the independent prior. Taken together, these results suggest that the improvement is driven primarily by the joint construction of the four channels from a shared source matrix, rather than by the particular structure encoded in the template.

\begin{table}[h]
    \centering
    \caption{
        Effect of source coupling and template choice on structural generation quality, measured by average ratio ($\downarrow$; mean $\pm$ standard deviation). Best results are shown in bold.
    }
    \label{tab:source_construction_ablation}

    \begin{minipage}{\linewidth}
        \centering
        \footnotesize
        \setlength{\tabcolsep}{6pt}
        \renewcommand{\arraystretch}{0.8}

        \begin{tabular}{@{}lccc@{}}
            \midrule
            Dataset
            & Independent
            & \multicolumn{2}{c}{Coupled} \\
            \cmidrule(lr){3-4}
            & & Gaussian only ($\bA_0=\mathbf{0}$)
            & Structured template \\
            \midrule
            Circuit
            & $2.59 \pm 0.21$
            & $1.88 \pm 0.36$
            & $\mathbf{1.69 \pm 0.22}$ (Chain) \\
            Visual Genome
            & $6.23 \pm 0.67$
            & $3.87 \pm 0.59$
            & $\mathbf{3.32 \pm 0.53}$ (Cycle) \\
            \bottomrule
        \end{tabular}

        \par\vspace{3pt}
        {\scriptsize\raggedright
        \textit{Note:} The structured variant uses a directed chain for
        Circuit to reflect its ordered signal-flow dependencies and a directed
        cycle for Visual Genome, whose relations are not assumed to follow an
        acyclic ordering. 
        \par}
    \end{minipage}
\end{table}

\subsection{Node-label generation}\label[Appendix]{app.label_decoding}
In directed graphs, categorical node labels often encode structural or functional roles associated with patterns of incoming and outgoing connectivity. To capture this dependence, PolarFlow generates node labels in a second stage conditioned on the graph structure. Given an adjacency matrix $\bA\in\{0,1\}^{N\times N}$ and node labels $\by\in\calC^N$, this stage models the conditional distribution $p_\psi(\by\mid\bA)$. During training, $\bA$ is a data adjacency matrix; at inference, it is replaced by the generated adjacency matrix $\widehat{\bA}$, which remains fixed throughout label generation.

We introduce a label time $\tau\in[0,1]$, independent of the structural-flow time $t$, and augment the label space $\calC$ with a mask state $\mathsf{M}$, giving $\overline{\calC}=\calC\cup\{\mathsf{M}\}$. For each data label vector $\by$, we define a conditional probability path $q_\tau(\by_\tau\mid\by)$ from the all-mask state $\by_0=(\mathsf{M},\ldots,\mathsf{M})$ to the data label vector $\by_1=\by$:
\vspace{-2mm}
\begin{align*}
q_\tau(\by_\tau\mid\by)= \prod_{i=1}^{N} 
    \left[
        (1-\tau)\mathbf{1}\{y_{\tau,i}=\mathsf{M}\}
        + \tau\mathbf{1}\{y_{\tau,i}=y_i\}
    \right].
\end{align*}
Thus, at time $\tau$, each node remains masked with probability $1-\tau$ and takes it data label with probability $\tau$. These marginals are coupled monotonically across time: each node switches from $\mathsf{M}$ to its endpoint label once and remains fixed thereafter.


Let $ \calI_\tau = \{i \in \calV: y_{\tau,i}=\mathsf{M}\}$ denote the set of masked nodes. Given the partially revealed label vector $\by_\tau$, the fixed adjacency matrix $\bA$, a direction-aware graph neural network $f_\psi$ uses separate incoming and outgoing aggregations to predict a distribution $\pi_\psi^i(\cdot \mid \by_\tau,\bA,\tau)$ over $\calC$ for each masked node $i \in \calI_\tau$. 
The model is trained by minimizing cross-entropy over masked nodes:
\begin{equation}
    \calL_{\mathrm{label}}(\psi)
    =
    -\mathbb E
    \left[\frac{1}{|\calI_{\tau}|}\sum_{i\in\calI_{\tau}}
        \log
        \pi_\psi^i
        \left(
            y_i\mid\by_\tau,\bA,\tau
        \right)
    \right].
    \label{eq:label_loss}
\end{equation}
where the expectation is take over $(\bA,\by)\sim p_{\mathrm{data}}$,  $\tau\sim\operatorname{Unif}(0,1)$ and $\by_\tau\sim q_\tau(\cdot\mid\by)$. 

At inference, label generation starts from the all-mask state and remains conditioned on the fixed generated adjacency matrix $\widehat{\bA}$. At each iteration, $f_\psi$ predicts a distribution over the final label of every masked node. For each masked node, one label is sampled from its predicted distribution. A stochastic reveal score is then assigned to the resulting node-label pair, with higher-probability samples tending to receive higher scores. The procedure reveals the labels of the highest-scoring nodes until the next target reveal level is reached. The remaining nodes stay masked, while previously revealed labels remain fixed. The updated partially labeled state is passed back into $f_{\psi}$ to refresh the predictions for the remaining masked nodes. This process continues until all nodes have been assigned labels, yielding $\hat{\by}$.
The complete decoding procedure is provided in \cref{alg:label_decoding}.


\Cref{alg:label_decoding} details the iterative procedure for sampling node labels conditioned on a generated adjacency matrix. Decoding begins from the all-mask state and proceeds for \(K\) steps along the nominal cosine label-time grid \(\tau_k=1-\cos(\pi k/(2K))\). At iteration \(k\), \(f_\psi\) predicts a categorical label distribution \(\pi_i\) for each masked node \(i\), conditioned on the current partially labeled state and the fixed adjacency matrix. Following the Gumbel-Top-\(k\) construction \citep{kool2019stochastic}, a label \(c_i\sim\pi_i\) is sampled for each masked node and assigned the stochastic reveal score
\[
s_i
=
\log\pi_i(c_i)+\eta_k g_i,
\qquad
g_i\overset{\mathrm{i.i.d.}}{\sim}\operatorname{Gumbel}(0,1),
\qquad
\eta_k
=
\eta_0\left(1-\frac{k+1}{K}\right).
\]
Here, \(\eta_0\) denotes the base scale of the Gumbel perturbation. The next label time \(\tau_{k+1}\) determines the target number of masked nodes after iteration \(k\). Subject to revealing at least one node whenever masks remain, the procedure commits the sampled labels of the \(m\) highest-scoring masked nodes. The other masked nodes remain masked, while labels committed in previous iterations remain unchanged. At the next iteration, \(f_\psi\) is applied again to the updated partially labeled state to predict labels for the remaining masked nodes. This process terminates when no masked nodes remain, yielding \(\widehat{\by}\).

\begin{algorithm}[!h]
\caption{Structure-conditioned node-label decoding}
\label{alg:label_decoding}
\textbf{Input:} Generated adjacency matrix $\widehat{\bA}$; trained label predictor $f_\psi$; number of decoding steps $K$; base Gumbel scale $\eta_{\mathrm{base}}$\\
\textbf{Output:} Generated label vector $\widehat{\by}$
\begin{algorithmic}[1]
\State $\widehat{\by}\gets(\mathsf{M},\ldots,\mathsf{M})$
\State Set
$\tau_k\gets 1-\cos\!\left(\frac{\pi k}{2K}\right)$
for $k=0,\ldots,K$

\For{$k=0,\ldots,K-1$}
    \State $\calI\gets
    \{i\in\calV:\widehat y_i=\mathsf{M}\}$

    \If{$\calI=\varnothing$}
        \State \Return $\widehat{\by}$
    \EndIf

    \State $\pi_i\gets
    \pi_\psi^i(\cdot\mid\widehat{\by},\widehat{\bA},\tau_k)$
    for all $i\in\calI$

    \State $\eta_k\gets
    \eta_{\mathrm{base}}\left(1-\frac{k+1}{K}\right)$
    
    \For{$i\in\calI$}
        \State Sample a label $c_i\sim\pi_i$ and Gumbel noise
        $g_i\sim\operatorname{Gumbel}(0,1)$
        \State $s_i\gets\log\pi_i(c_i)+\eta_k g_i$
    \EndFor

    \State $m\gets
    \min\left\{
        |\calI|,
        \max\left\{
            1,\,
            |\calI|-\left\lfloor N(1-\tau_{k+1})\right\rfloor
        \right\}
    \right\}$
    \Comment{Number of labels revealed}

    \State $\calS\gets$ indices of the $m$ largest scores
    $\{s_i:i\in\calI\}$

    \For{$i\in\calS$}
        \State $\widehat y_i\gets c_i$
    \EndFor
\EndFor

\State \Return $\widehat{\by}$

\end{algorithmic}
\end{algorithm}

\subsection{Resources and runtime}\label[Appendix]{app.runtime}
All runtime measurements were obtained on a single NVIDIA A100-SXM4 GPU with 80,GB of memory. \Cref{tab:complete_runtime} summarizes the computational requirements of training and sampling across datasets and methods, including the number of training and generated (sampled) graphs, training time per epoch, average sampling time per graph, and training VRAM.
For completeness, we additionally include the empirical MLE baseline and, on the DAG dataset TPU Tiles, the DAG-specific D-VAE and LayerDAG baselines.

\begin{table*}[t]
    \centering
    \caption{\small
    Complete training and sampling resource comparison across datasets and methods. Training time per epoch, and
    sampling time is averaged over individually generated graphs. VRAM is measured during training.
    }
    \label{tab:complete_runtime}
    \scriptsize
    \setlength{\tabcolsep}{5pt}
    \renewcommand{\arraystretch}{0.92}
    \begin{tabular}{@{}lccccc@{}}
        \midrule
        Model
        & \# Train graphs
        & Training time (s/epoch)
        & \# Sampled
        & Sampling time (s/graph)
        & VRAM (GB) \\
        \midrule

        \multicolumn{6}{c}{\textbf{SBM}} \\
        \midrule
        MLE                           & 128 & 7.20 & 40 & 0.01 & 0.15 \\
        DiGress                       & 128 & 2.81 & 40 & 4.33 & 44.32 \\
        DeFoG                         & 128 & 2.45 & 40 & 6.04 & 47.30\\
        Directo-DD RRWP               & 128 & 4.40 & 40 & 11.27 & 74.73 \\
        Directo-DD MagLap             & 128 & 5.76 & 40 & 21.07 & 74.66 \\
        Directo RRWP                  & 128 & 4.42 & 40 & 11.20 & 66.66 \\
        Directo MagLap                & 128 & 5.88 & 40 & 21.22 & 74.94 \\
        PolarFlow                     & 128 & 2.66 & 40 & 0.13  & 2.32 \\
        
        \midrule

        \multicolumn{6}{c}{\textbf{Twitter}} \\
        \midrule
        MLE                            & 3447 & 14.15 & 422 & 2e-4 & 1.70 \\
        DiGress                        & 3447 & 21.56 & 422 & 1.65 & 8.89 \\
        DeFoG                          & 3447 & 16.85 & 422 & 1.80 & 8.96 \\
        Directo-DD RRWP                & 3447 & 29.67 & 422 & 1.69 & 18.67 \\
        Directo-DD MagLap              & 3447 & 45.19 & 422 & 3.77 & 18.63 \\
        Directo RRWP                   & 3447 & 29.67 & 422 & 3.56 & 18.66 \\
        Directo MagLap                 & 3447 & 44.63 & 422 & 7.78 & 18.63 \\
        PolarFlow                      & 3447 & 3.67  & 422 & 0.19 & 1.29 \\

        \midrule
        
        \multicolumn{6}{c}{\textbf{Circuit}} \\
        \midrule
        MLE                            & 251 & 34.8 & 26 & 1e-4 & 0.74 \\
        DiGress                        & 251 & 1.47 & 26 & 1.38 & 7.89 \\
        DeFoG                          & 251 & 1.13 & 26 & 1.91 & 8.15 \\
        Directo-DD RRWP                & 251 & 1.45 & 26 & 1.00 & 17.79 \\
        Directo-DD MagLap              & 251 & 2.09 & 26 & 2.17 & 17.79 \\
        Directo RRWP                   & 251 & 1.42 & 26 & 1.88 & 18.27 \\
        Directo MagLap                 & 251 & 2.09 & 26 & 6.66 & 18.27 \\
        PolarFlow                      & 251 & 1.10 & 26 & 0.16 & 1.03 \\

        \midrule

        \multicolumn{6}{c}{\textbf{Visual Genome}} \\
        \midrule
        MLE                            & 203 & 72   & 63 & 0.06 & 0.15 \\
        DiGress                        & 203 & 0.70 & 63 & 0.74 & 3.59 \\
        DeFoG                          & 203 & 0.54 & 63 & 0.58 & 4.04 \\
        Directo-DD RRWP                & 203 & 0.77 & 63 & 0.53 & 7.99 \\
        Directo-DD MagLap              & 203 & 1.21 & 63 & 1.44 & 8.01 \\
        Directo RRWP                   & 203 & 0.78 & 63 & 0.96 & 8.69 \\
        Directo MagLap                 & 203 & 1.16 & 63 & 2.73 & 8.72 \\
        PolarFlow                      & 203 & 0.81 & 63 & 0.04 & 1.02 \\

        \midrule
        \multicolumn{6}{c}{\textbf{TPU Tiles}} \\
        \midrule
        MLE                            & 2539 & 7.43 & 326 & 1e-4 & 0.05  \\
        D-VAE                          & 2539 & OOM   & OOM & OOM  & OOM \\
        LayerDAG                       & 2539 & 0.34 & 326  & 6e-4 & 1.11 \\
        DiGress                        & 2539 & 16.53 & 326 & 1.51 & 9.42 \\
        DeFoG                          & 2539 & 13.27 & 326 & 1.79 & 9.13 \\
        Directo-DD RRWP                & 2539 & 22.86 & 326 & 1.67 & 19.67 \\
        Directo-DD MagLap              & 2539 & 32.39 & 326 & 3.41 & 20.07 \\
        Directo RRWP                   & 2539 & 22.79 & 326 & 3.23 & 19.55 \\
        Directo MagLap                 & 2539 & 32.52 & 326 & 6.91 & 19.93 \\
        PolarFlow                      & 2539 & 3.33 & 326 & 0.14 & 1.13 \\
        \bottomrule
    \end{tabular}

        \par\vspace{3pt}
        {\scriptsize\raggedright
        \textit{Note:} MLE requires no iterative optimization; its training-time entry therefore reports the total runtime required to fit the empirical distributions. The polar--spectral decompositions of the training graphs are precomputed once on CPU and are excluded from the reported per-epoch training time. 
        \par}
\end{table*}

\section{Additional numerical results}
\label[Appendix]{app.complete_results}
This section provides the complete descriptor-level evaluation results. \Cref{tab:directed_structural,tab:directed_real} report MMD values for the in- and out-degree distributions, directed clustering coefficients, and spectral and wavelet features on the synthetic and real-world datasets, respectively. \Cref{tab:directed_structural_node_label} further reports node-label generation performance using node-type MMD, multiplicity-aware triplet precision and recall, Triplet MMD, and RBF MMD. 

\paragraph{Structural generation}
The in- and out-degree descriptors characterize directional connectivity patterns, directed clustering captures local triadic organization, and the spectral and wavelet features summarize global and multiscale graph structure. As shown in \cref{tab:directed_structural,tab:directed_real}, PolarFlow achieves strong descriptor-level alignment, including the lowest in- and out-degree MMDs on Visual Genome and TPU Tiles and the lowest wavelet MMD on Circuits. Its clustering and spectral results on Circuits are also comparable to those of the learned baselines. Across the complete set of descriptors, PolarFlow obtains the lowest average ratio on all three real-world datasets, indicating strong overall alignment across complementary views of directed graph structure.

\begin{table}[h]
\caption{Directed graph generation performance on structural graphs. Results are presented as mean $\pm$ standard deviation across five sampling runs. }
\label{tab:directed_structural}
\centering
\resizebox{\textwidth}{!}{%
\begin{tabular}{l cccc c ccccc}
\midrule
\textbf{Model} & \textbf{In-degree}$\downarrow$ & \textbf{Out-degree}$\downarrow$ & \textbf{Clustering}$\downarrow$ & \textbf{Spectre}$\downarrow$ & \textbf{Wavelet}$\downarrow$ & \textbf{Ratio}$\downarrow$ & \textbf{Valid}$\uparrow$ & \textbf{Unique}$\uparrow$ & \textbf{Novel}$\uparrow$ & \textbf{V.U.N.}$\uparrow$ \\
\midrule
\multicolumn{11}{c}{Stochastic Block Model (SBM)} \\
\midrule
Training set & 0.0031 & 0.0031 & 0.0274 & 0.0027 & 0.0011 & 1.0 & 97.7 & 100 & 0.0 & 0.0 \\
\midrule
MLE & $0.0020\pm 0.0002$ & $0.0020\pm 0.0002$ & $0.1973\pm 0.0162$ & $0.0087\pm 0.0003$ & $0.0508\pm 0.0009$ & $11.6\pm 0.2$ & $0.0\pm 0.0$ & $100\pm 0.0$ & $100\pm 0.0$ & $0.0\pm 0.0$ \\
DiGress & $0.0038\pm 0.0018$ & $0.0037\pm 0.0018$ & $0.0722\pm 0.0098$ & $0.0045\pm 0.0004$ & $0.0142\pm 0.0039$ & $3.9\pm 0.9$ & $41.5\pm 5.1$ & $100\pm 0.0$ & $100\pm 0.0$ & $41.5\pm 5.1$ \\
DeFoG & $0.0022\pm 0.0016$ & $0.0026\pm 0.0019$ & $0.0813\pm 0.0058$ & $0.0048\pm 0.0002$ & $0.0110\pm 0.0019$ & $4.3\pm 0.8$ & $37\pm 6.6$ & $100\pm 0.0$ & $100\pm 0.0$ & $37.0\pm 6.6$ \\
\midrule
Directo-DD RRWP & $0.0036\pm 0.0018$ & $0.0036\pm 0.0019$ & $0.0592\pm 0.0027$ & $0.0038\pm 0.0007$ & $0.0027\pm 0.0009$ & $ 1.7\pm 0.4$ & $81.5\pm 3.2$ & $100\pm 0.0$ & $100\pm 0.0$ & $81.5\pm 3.2$ \\
Directo-DD MagLap & $0.0038\pm 0.0018$ & $0.0037\pm 0.0019$ & $0.0450\pm 0.0037$ & $0.0038\pm 0.0006$ & $0.0021\pm 0.0009$ & $1.5\pm 0.4$ & $95.5\pm 3.7$ & $100\pm 0.0$ & $100\pm 0.0$ & $95.5\pm 3.7$ \\
Directo RRWP & $0.0028\pm 0.0013$ & $0.0031\pm 0.0014$ & $0.0594\pm 0.0027$ & $0.0035\pm 0.0004$ & $0.0027\pm 0.0009$ & $1.8\pm 0.5$ & $99.5\pm 1.0$ & $100\pm 0.0$ & $100\pm 0.0$ & $99.5\pm 1.0$ \\
Directo MagLap & $0.0038\pm 0.0010$ & $0.0039\pm 0.0012$ & $0.0654\pm 0.0052$ & $0.0038\pm 0.0003$ & $0.0039\pm 0.0008$ & $2.0\pm 0.3$ & $96.5\pm 2.5$ & $100\pm 0.0$ & $100\pm 0.0$ & $96.5\pm 2.5$ \\
\midrule
\rowcolor{gray!12}
Ours & $0.0035\pm 0.0013$ & $0.0037\pm 0.0014$ & $0.0574\pm0.0010 $ & $0.0042\pm 0.0003$ & $0.0048\pm0.0012 $ & $2.0\pm0.3 $ & $92.5\pm 3.8$ & $100\pm 0.0$ & $100\pm 0.0$ & $92.5\pm 3.8$ \\

\midrule
\multicolumn{11}{c}{Twitter} \\
\midrule
Training set & 0.0034     & 0.0017     & 0.0027 & 0.0103 & 0.0086 & 1.0 & --- & 100 & 0.0 & 0.0 \\
\midrule
MLE     & $0.0091\pm0.0001$  & $0.0088\pm0.0001 $ & $0.0919\pm0.0022 $ & $0.0511\pm0.0003$ & $0.1381\pm0.0008$ & $12.59\pm0.18$ & --- & $100\pm0.0$ & $100\pm0.0$ & $100\pm0.0$ \\
DiGress & $0.0023\pm0.0001$  & $0.0035\pm 0.0002$ & $0.0049\pm0.0000 $ & $0.0085\pm0.0002$ & $0.0144\pm0.0005$ & $1.34\pm0.02$ & --- & $100\pm0.0$ & $100\pm0.0$ & $100\pm0.0$ \\
DeFoG   & $0.0037\pm0.0003$  & $0.0024\pm 0.0002$ & $0.0049\pm0.0000 $ & $0.0174\pm0.0014$ & $0.0088\pm0.0005$ & $1.48\pm0.04$ & --- & $100\pm0.0$ & $100\pm0.0$ & $100\pm0.0$ \\
\midrule
Directo-DD RRWP & $0.0041\pm0.0001$  & $0.0016\pm0.0001 $ & $0.0048\pm0.0000 $ & $0.0089\pm0.0004$ & $0.0078\pm0.0003$ & $1.14\pm0.21$ & --- & $100\pm0.0$ & $100\pm0.0$ & $100\pm0.0$ \\
Directo-DD MagLap & $0.0052\pm0.0005$  & $0.0017\pm0.0002 $ & $0.0049\pm0.0000 $ & $0.0153\pm0.0026$ & $0.0063\pm0.0007$ & $1.31\pm0.11$ & --- & $100\pm0.0$ & $100\pm0.0$ & $100\pm0.0$ \\
Directo RRWP & $0.0045\pm0.0003$  & $0.0017\pm0.0002 $ & $0.0048\pm0.0000 $ & $0.0112\pm0.0007$ & $0.0048\pm0.0004$ & $1.15\pm0.04$ & --- & $100\pm0.0$ & $100\pm0.0$ & $100\pm0.0$ \\
Directo MagLap & $0.0052\pm0.0004$  & $0.0019\pm0.0003 $ & $0.0048\pm0.0000 $ & $0.0153\pm0.0021$ & $0.0078\pm0.0008$ & $1.37\pm0.11$ & --- & $100\pm0.0$ & $100\pm0.0$ & $100\pm0.0$ \\
\midrule
\rowcolor{gray!12}
Ours & $0.0030\pm0.0002$  & $0.0026\pm0.0003 $ & $0.0050\pm0.0000$ & $0.0079\pm0.0005$ & $0.0134\pm0.0011$ & $1.33\pm0.04$ & --- & $100\pm0.0$ & $100\pm0.0$ & $100\pm0.0$ \\
\bottomrule
\end{tabular}%
}
\end{table}

\begin{table}[h]
\caption{Directed graph generation performance on real graphs. Results are presented as mean $\pm$ standard deviation across five sampling runs. OOT indicates that the model could not be run within a reasonable time frame.}
\label{tab:directed_real}
\centering
\resizebox{\textwidth}{!}{%
\begin{tabular}{l cccc c ccccc}
\midrule
\textbf{Model} & \textbf{In-degree}$\downarrow$ & \textbf{Out-degree}$\downarrow$ & \textbf{Clustering}$\downarrow$ & \textbf{Spectre}$\downarrow$ & \textbf{Wavelet}$\downarrow$ & \textbf{Ratio}$\downarrow$ & \textbf{Valid}$\uparrow$ & \textbf{Unique}$\uparrow$ & \textbf{Novel}$\uparrow$ & \textbf{V.U.N.}$\uparrow$ \\
\midrule
\multicolumn{11}{c}{Circuit} \\
\midrule
Training set & 0.0005 & 0.0105 & 0.0393 & 0.0123 & 0.0089 & 1.0 & --- & 100 & 0.0 & 0.0 \\
\midrule
MLE & $0.0121\pm0.0029$  & $0.0460\pm0.0070 $ & $0.1701\pm0.0335 $ & $0.1489\pm0.0088$ & $0.0815\pm0.0095$ & $6.49\pm0.65$ & --- & $100.0\pm0.0$ & $100.0\pm0.0$ & $100.0\pm0.0$ \\
DiGress & $0.0266\pm0.0031$  & $0.0117\pm0.0020 $ & $0.0758\pm0.0128 $ & $0.0157\pm0.0038$ & $0.0143\pm0.0023$ & $2.10\pm0.13$ & --- & $100.0\pm0.0$ & $98.46\pm1.88$ & $98.46\pm1.88$ \\
DeFoG & $0.0256\pm0.0024$  & $0.0196\pm0.0023 $ & $0.0734\pm0.0143 $ & $0.0126\pm0.0019$ & $0.0115\pm0.0025$ & $1.90\pm0.24$ & --- & $100.0\pm0.0$ & $99.23\pm1.54$ & $99.23\pm1.54$ \\
\midrule
Directo-DD RRWP & $0.0134\pm0.0026$  & $0.0256\pm0.0037 $ & $0.0678\pm0.0217 $ & $0.0188\pm0.0104$ & $0.0124\pm0.0067$ & $1.95\pm0.44$ & --- & $99.23\pm1.54$ & $96.92\pm2.88$ & $96.17\pm3.21$ \\
Directo-DD MagLap & $0.1241\pm0.0023$  & $0.0272\pm0.0040 $ & $0.0828\pm0.0056 $ & $0.0201\pm0.0095$ & $0.0134\pm0.0062$ & $2.06\pm0.38$ & --- & $100.0\pm0.0$ & $100.0\pm0.0$ & $100\pm0.0$ \\
Directo RRWP & $0.0108\pm0.0030$  & $0.0257\pm0.0030 $ & $0.0852\pm0.0183 $ & $0.0179\pm0.0037$ & $0.0114\pm0.0022$ & $1.90\pm0.21$ & --- & $100\pm0.0$ & $99.23\pm1.53$ & $99.23\pm1.53$ \\
Directo MagLap & $0.0115\pm0.0029$  & $0.0250\pm0.0023 $ & $0.0693\pm0.0099$ & $0.0149\pm0.0030$ & $0.0106\pm0.0030$ & $1.77\pm0.25$ & --- & $99.24\pm1.53$ & $96.91\pm2.87$ & $96.17\pm3.21$ \\
\midrule
Ours & $0.0110\pm0.0008$  & $0.0202\pm0.0034 $ & $0.0657\pm 0.0198$ & $0.0168\pm0.0065$ & $0.0092\pm0.0019$ & $1.61\pm0.22$ & --- & $100\pm0.0$ & $100\pm0.0$ & $100\pm0.0$ \\
\midrule 
\multicolumn{11}{c}{Visual Genome} \\
\midrule
Training set & 0.003 & 0.0018 & 0 & 0.0072 & 0.0036 & 1.0 & 100 & 100 & 0.0 & 0.0 \\
\midrule 
MLE & $0.0474\pm0.0023$  & $0.0607\pm 0.0036$ & $0.5342\pm 0.0830$ & $0.0535\pm0.0015$ & $0.0399\pm0.0017$ & $17.00\pm0.60$ & $90.81\pm0.92$ & $100.0\pm0.0$ & $100.0\pm0.0$ & $90.81\pm0.92$ \\
DiGress & $0.0389\pm0.0032$  & $0.0654\pm 0.0044$ & $0.0008\pm 0.0008$ & $0.0416\pm0.0017$ & $0.0225\pm0.0007$ & $17.00\pm0.60$ & $96.59\pm0.55$ & $100.0\pm0.0$ & $100.0\pm0.0$ & $96.59\pm0.55$ \\
DeFoG & $0.0396\pm0.0028$  & $0.0447\pm 0.0044$ & $0.0002\pm 0.0002$ & $0.0122\pm0.0020$ & $0.0089\pm0.0011$ & $10.60\pm0.80$ & $97.81\pm0.62$ & $100.0\pm0.0$ & $100.0\pm0.0$ & $97.81\pm0.62$ \\
\midrule
Directo-DD RRWP & $0.0413\pm0.0037$  & $0.0424\pm 0.0043$ & $0.0000\pm 0.0000$ & $0.0132\pm0.0053$ & $0.0074\pm0.0083$ & $15.30\pm0.80$ & $99.75\pm0.18$ & $100.0\pm0.0$ & $100.0\pm0.0$ & $99.75\pm0.18$ \\
Directo-DD MagLap & $0.0359\pm0.0036$  & $0.0245\pm 0.0022$ & $0.0000\pm 0.0000$ & $0.0163\pm0.0026$ & $0.0086\pm0.0011$ & $7.60\pm0.70$ & $99.84\pm0.06$ & $100.0\pm0.0$ & $100.0\pm0.0$ & $99.84\pm0.06$ \\
Directo RRWP & $0.0425\pm0.0029$  & $0.0494\pm 0.0021$ & $0.0000\pm 0.0000$ & $0.0226\pm0.0075$ & $0.0228\pm0.0058$ & $12.80\pm0.60$ & $99.94\pm0.05$ & $100.0\pm0.0$& $100.0\pm0.0$ & $99.94\pm0.05$ \\
Directo MagLap & $0.0302\pm0.0040$  & $0.0180\pm 0.0024$ & $0.0000\pm 0.0000$ & $0.0142\pm0.0021$ & $0.0099\pm0.0013$ & $6.02\pm0.51$ & $99.08\pm0.71$ & $99.37\pm0.78$ & $99.68\pm0.63$ & $99.08\pm0.71$ \\
\midrule
\rowcolor{gray!12}
Ours & $0.0089\pm0.0021$  & $0.0055\pm 0.0018$ & $0.1231\pm 0.0446$ & $0.0169\pm0.0028$ & $0.0105\pm0.0009$ & $3.32\pm0.53$ & $91.65\pm1.04$ & $100.0\pm0.0$ & $100.0\pm0.0$ & $91.51\pm0.85$ \\
\midrule 
\multicolumn{11}{c}{TPU Tiles} \\
\midrule
Training set & 0.0008 & 0.0009 & 0.0008 & 0.0009 & 0.0004 & 1.0 & 100 & 100 & 0.0 & 0.0 \\
\midrule
MLE         & $0.1067\pm0.0016$  & $0.0377\pm0.0015 $ & $0.0630\pm0.0031 $ & $0.0681\pm0.0017$ & $0.0752\pm0.0013$ & $95.54\pm0.93$ & $52.27\pm2.32$ & $100.0\pm0.0$ & $100.0\pm0.0$ & $52.27\pm2.32$ \\
D-VAE       &  OOT               &  OOT               &  OOT               &  OOT              &  OOT &  OOT &  OOT &  OOT &  OOT &  OOT \\
LayerDAG    & $0.1754\pm0.0316$  & $0.1377\pm 0.0435$ & $0.0162\pm0.0049 $ & $0.0349\pm0.0064$ & $0.1103\pm0.0206$ & $135.14\pm10.23$ & $100.0\pm0.0$ & $99.93\pm1.22$ & $100.0\pm0.0$ & $99.93\pm1.22$ \\
DiGress     & $0.0953\pm0.0012$  & $0.0056\pm 0.0003$ & $0.0016\pm0.0002 $ & $0.0065\pm0.0010$ & $0.0035\pm0.0006$ & $27.94\pm0.67$ & $87.06\pm1.50$ & $99.88\pm0.25$ & $99.69\pm0.27$ & $86.63\pm1.06$ \\
DeFoG       & $0.0942\pm0.0016$  & $0.0054\pm 0.0003$ & $0.0015\pm 0.0004$ & $0.0045\pm0.0004$ & $0.0027\pm0.0002$ & $26.93\pm0.53$ & $79.69\pm0.76$ & $99.82\pm0.25$ & $99.51\pm0.31$ & $79.20\pm0.85$ \\
\midrule
Directo-DD RRWP & $0.0963\pm0.0013$  & $0.0057\pm 0.0002$ & $0.0029\pm 0.0011$ & $0.0028\pm0.0005$ & $0.0015\pm0.0002$ & $27.02\pm0.47$ & $88.09\pm2.28$ & $93.68\pm1.79$ & $90.18\pm2.22$ & $88.09\pm2.28$ \\
Directo-DD MagLap & $0.0647\pm0.0026$  & $0.0035\pm 0.0003$ & $0.0012\pm 0.0006$ & $0.0029\pm0.0004$ & $0.0018\pm0.0002$ & $18.46\pm0.87$ & $83.74\pm1.63$ & $99.01\pm0.59$ & $97.91\pm0.68$ & $81.74\pm1.69$ \\
Directo RRWP & $0.0952\pm0.0013$  & $0.0056\pm 0.0002$ & $0.0006\pm0.0002 $ & $0.0022\pm0.0001$ & $0.0012\pm0.0002$ & $25.96\pm0.42$ & $86.44\pm1.08$ & $95.83\pm0.53$ & $92.45\pm1.56$ & $78.34\pm1.17$ \\
Directo MagLap & $0.0651\pm0.0019$  & $0.0035\pm 0.0002$ & $0.0077\pm0.0033 $ & $0.0043\pm0.0008$ & $0.0019\pm0.0003$ & $20.96\pm1.35$ & $85.95\pm1.46$ & $98.71\pm0.76$ & $96.81\pm0.49$ & $82.69\pm0.98$ \\
\midrule
\rowcolor{gray!12}
Ours & $0.0004\pm0.0002$  & $0.0028\pm 0.0004$ & $0.0143\pm0.0017 $ & $0.0138\pm0.0012$ & $0.0094\pm0.0008$ & $11.54\pm0.68$ & $84.97\pm1.23$ & $99.81\pm0.15$ & $97.18\pm0.36$ & $82.45\pm1.22$ \\

\bottomrule
\end{tabular}%
}
\end{table}

\paragraph{Node-label generation}  
\Cref{tab:directed_structural_node_label} reports node-label and joint node--edge alignment on the three labeled real-world datasets. Overall, PolarFlow delivers broadly competitive label-generation performance across the complementary metrics. On Circuits, it obtains a competitive node-type MMD and the highest multiplicity-aware precision among the evaluated generators, with recall remaining within the range of the learned baselines. On Visual Genome, it again achieves the highest precision and the lowest RBF MMD while maintaining competitive node-type alignment. On TPU Tiles, its results remain on the same numerical scale as the general-purpose baselines, with Triplet and RBF MMD substantially improving over MLE and LayerDAG. Collectively, these results demonstrate consistent alignment in node-label marginals, semantic-tuple support, and global node--structure representations across different graph domains.

Because node labels are generated conditionally on the sampled adjacency matrix, stronger structural fidelity may provide a more data-like context for label prediction. This is consistent with the favorable RBF MMD on Visual Genome. Since the RBF embeddings jointly encode topology and labels, this result is best interpreted as evidence of end-to-end alignment in the two-stage generation process rather than performance attributable to the label generator alone.

\begin{table}[h]
\caption{Directed graph generation performance on node-label distributional alignment metrics for the three real-world datasets. Results are presented as mean $\pm$ standard deviation across five sampling runs. }
\label{tab:directed_structural_node_label}
\centering
\resizebox{\textwidth}{!}{%
\begin{tabular}{l ccccc}
\midrule
\textbf{Model} & \textbf{Node type (MMD)}$\downarrow$ & \textbf{Precision}$\uparrow$ & \textbf{Recall}$\uparrow$ & \textbf{Triplet MMD}$\downarrow$ & \textbf{RBF}$\downarrow$ \\
\midrule
\multicolumn{6}{c}{Circuit} \\
\midrule
Training set & 0.0057 & 0.8816 & 0.8968  & 0.0117 & 0.0918 \\
\midrule
MLE         & $0.0535\pm0.0015$  & $0.7889\pm0.0284 $ & $0.9062\pm0.0278 $ &  $0.0347\pm0.0023$  & $0.1599\pm0.0334$    \\
Digress     & $0.0951\pm0.0081$  & $0.8793\pm0.0523 $ & $0.8848\pm0.0605 $ &  $0.0177\pm0.0120$  & $0.0973\pm0.0107$    \\
DeFoG       & $0.0885\pm0.0080$  & $0.8682\pm0.0375 $ & $0.8094\pm0.0471 $ &  $0.0213\pm0.0074$  & $0.0983\pm0.0045$    \\
\midrule
Directo-DD RRWP    & $0.0088\pm0.0024$  & $0.8133\pm 0.0454$ & $0.8976\pm0.0148 $ &  $0.0130\pm0.0053$  & $0.0896\pm0.0184$    \\
Directo-DD MagLap  & $0.0039\pm0.0032$  & $0.8751\pm0.0363 $ & $0.8675\pm0.0453 $ &  $0.0147\pm0.0061$  & $0.1032\pm0.0183$    \\
Directo RRWP       & $0.0093\pm0.0099$  & $0.8421\pm0.0440 $ & $0.8529\pm 0.0634$ &  $0.0142\pm0.0058$  & $0.0896\pm0.0097$    \\
Directo MagLap     & $0.0079\pm0.0107$  & $0.8593\pm0.0492 $ & $0.8806\pm 0.0359$ &  $0.0163\pm0.0048$  & $0.0900\pm0.0103$    \\
\midrule
\rowcolor{gray!12}
Ours               & $0.0076\pm0.0037$  & $0.8881\pm 0.0464$ & $0.8697\pm 0.0454$ &  $0.0231\pm0.0089$  & $0.1078\pm0.0137$    \\
\midrule
\multicolumn{6}{c}{Visual Genome} \\
\midrule
Training set & 0.0116 & 0.6381 & 0.6200  & 0.0123 & 0.0256 \\
\midrule
MLE         & $0.0519\pm0.0021$  & $0.6042\pm 0.0307$ & $0.1310\pm 0.0064$ &  $0.3298\pm0.0036$  & $0.1206\pm0.0087$    \\
Digress     & $0.0206\pm0.0034$  & $0.7066\pm 0.0392$ & $0.2748\pm 0.0178$ &  $0.2098\pm0.0097$  & $0.0830\pm0.0120$    \\
DeFoG       & $0.0110\pm0.0019$  & $0.6132\pm 0.0407$ & $0.5309\pm0.0038 $ &  $0.0160\pm0.0036$  & $0.0738\pm0.0024$    \\
\midrule
Directo-DD RRWP    & $0.0231\pm0.0082$  & $0.5583\pm 0.0368$ & $0.5377\pm 0.0291$ &  $0.0272\pm0.0087$  & $0.0533\pm0.0112$    \\
Directo-DD MagLap  & $0.0288\pm0.0069$  & $0.5719\pm 0.0207$ & $0.5188\pm 0.0240$ &  $0.0319\pm0.0054$  & $0.0634\pm0.0111$    \\
Directo RRWP       & $0.0535\pm0.0149$  & $0.4775\pm 0.0462$ & $0.4707\pm 0.0462$ &  $0.0484\pm0.0158$  & $0.0646\pm0.0517$    \\
Directo MagLap     & $0.0299\pm0.0047$  & $0.5008\pm 0.0287$ & $0.4643\pm 0.0246$ &  $0.0348\pm0.0050$  & $0.0802\pm0.0074$    \\
\midrule
\rowcolor{gray!12}
Ours  & $0.0164\pm0.0030$  & $0.7040\pm 0.0192$ & $0.4755\pm0.0213$ &  $0.0701\pm0.0071$  & $0.0458\pm0.0070$    \\
\midrule
\multicolumn{6}{c}{TPU Tiles} \\
\midrule
Training set & 0.00016 & 0.9374 & 0.9769  & 0.00061 & 0.0056 \\
\midrule
MLE         & $0.0010\pm0.0001$  & $0.9051\pm 0.0157$ & $0.9336\pm 0.0054$ &  $0.0914\pm0.0039$  & $0.1578\pm0.0130$    \\
D-VAE       &  OOT   &  OOT   &  OOT   &  OOT   &  OOT      \\
LAYERDAG    & $0.0073\pm0.0027$  & $0.1057\pm 0.0085$ & $0.1228\pm 0.0299$ &  $0.4744\pm0.0110$  & $0.4224\pm0.0767$    \\
Digress     & $0.0014\pm0.0003$  & $0.8946\pm 0.0095$ & $0.9215\pm0.0055 $ &  $0.0047\pm0.0009$  & $0.0160\pm0.0001$    \\
DeFoG       & $0.0007\pm0.0002$  & $0.8976\pm 0.0071$ & $0.9296\pm 0.0110$ &  $0.0041\pm0.0006$  & $0.0167\pm0.0011$    \\
\midrule
Directo-DD RRWP    & $0.0007\pm0.0002$  & $0.9159\pm0.0052 $ & $0.9460\pm0.0051$ &  $0.0022\pm0.0003$  & $0.0126\pm0.0007$    \\
Directo-DD MagLap  & $0.0010\pm0.0003$  & $0.8969\pm0.0111 $ & $0.9349\pm0.0107 $ &  $0.0038\pm0.0009$  & $0.0151\pm0.0004$    \\
Directo RRWP       & $0.0007\pm0.0002$  & $0.9125\pm0.0070$ & $0.9389\pm0.0102 $ &  $0.0024\pm0.0004$  & $0.0132\pm0.0005$    \\
Directo MagLap     & $0.0014\pm0.0002$  & $0.8968\pm0.0082$ & $0.9055\pm 0.0071$ &  $0.0056\pm0.0007$  & $0.0163\pm0.0010$    \\
\midrule
\rowcolor{gray!12}
Ours  & $0.0018\pm0.0004$  & $0.8723\pm 0.0077$ & $0.9067\pm 0.0046$ &  $0.0093\pm0.0006$  & $0.0237\pm0.0021$    \\
\bottomrule
\end{tabular}%
}
\end{table}

\end{document}